\documentclass[11pt]{article}
\usepackage[default, probability style=bb, margin=1.25in, mathic=false]{ezlib}
\usepackage{csquotes}
\usepackage{tikz}
\usetikzlibrary{decorations.pathreplacing}
\MakeOuterQuote{"}

\selectcolormodel{natural}
\usepackage{ninecolors}
\selectcolormodel{rgb}

\usepackage{tabularray}
\UseTblrLibrary{booktabs}

\newcommand\extrafootertext[1]{%
    \bgroup%
    \renewcommand\thefootnote{\fnsymbol{footnote}}%
    \renewcommand\thempfootnote{\fnsymbol{mpfootnote}}%
    \footnotetext[0]{#1}%
    \egroup%
    \ignorespaces
}

\makeatletter
\@ifclassloaded{article}{
    \titleformat{\subsubsection}[runin]{\normalfont\ezlib@titlestyle}{\thesubsubsection}{\ezlib@headingsinline@numtitlesep}{\ezlib@headingsinline@period}
    \titlespacing*{\subsubsection}{0pt}{\medskipamount}{\ezlib@headingsinline@titletextsep}
    \titleformat{\paragraph}[runin]{\normalfont\itshape}{\thesubsubsection}{\ezlib@headingsinline@numtitlesep}{\ezlib@headingsinline@period}
    \titlespacing*{\paragraph}{0pt}{\medskipamount}{\ezlib@headingsinline@titletextsep}
}{}
\makeatother

\undef\see
\newcommand{\see}[2]{(#1{#2})}

\DeclareProbabilityCommand{\tfm}{P}

\newmathcommand{\Cchain}{C^{\text{chain}}}
\newcommand{\stocheq}{\mathrel{\stackrel{\scriptscriptstyle \mathrm{d}}{=}}}

\newcommand{\e}{\mathrm{e}}

\newcommand{\jobPair}{\calJ}

\newcommandPIE{\jjt}{\widetilde{\jobPair#1#2#3}}
\newcommand{\lst}{\widetilde}

\newrobustcmd{\shefali}[1]{{\color[HTML]{1591ea}[Shefali] \ignorespaces#1}}
\newrobustcmd{\edwin}[1]{{\color[HTML]{990099}[Edwin] \ignorespaces#1}}
\newrobustcmd{\ziv}[1]{{\color[HTML]{C8367C}[Ziv] \ignorespaces#1}}

\def\informalVersionOf\ref#1{%
    {informal version of \cref{#1}}%
}

\def\formalVersionOf\ref#1{%
    {formal version of \cref{#1}}%
}

\title{Priority Scheduling in the M/G/1\\with Preemption Overhead}
\author[Shefali Ramakrishna, Edwin Peng, and Ziv Scully.]{Shefali Ramakrishna\\Cornell University \and Edwin Peng\\Akuna Capital \and Ziv Scully\\Cornell University}

\begin{document}

\maketitle

\extrafootertext{This work was supported by the NSF under grant nos. CSR-1763701, CMMI-1938909, and CMMI-2307008.}

\begin{abstract}
Virtually all practical settings where preemptive scheduling is employed are susceptible to preemption overhead, and accounting for these overheads is necessary to make informed scheduling design decisions. However, preemption overhead is almost never accounted for in queueing-theoretic analyses of preemptive scheduling policies. This is true even for simple preemptive policies in simple queueing models: even the stability region, let alone the response time distribution, is difficult to analyze under overhead.

In this work, we give the \emph{first response time distribution analysis} of an M/G/1 under a preemptive scheduling policy with preemption overhead. Specifically, we consider class-based preemptive priority, where a stochastic overhead is incurred when pausing or resuming a job. We derive a recursive formula for the Laplace transform of response time for jobs of any given class, from which all response time moments can be extracted.

Beyond the specific policy and model we analyze, our broader aim is to provide a first step towards a general framework for analyzing queues with preemption overhead. To that end, we perform much of our analysis in a way that applies to a wide variety of overhead models by introducing a new theoretical tool called the \emph{job joint transform}.
\end{abstract}

\section{Introduction}

Preemptive scheduling policies, which allow pausing jobs mid-service, are ubiquitous because they allow important jobs to receive service ahead of unimportant jobs that would otherwise delay their completion. The canonical example is Shortest Remaining Processing Time (SRPT), which preemptively serves the job with least remaining work at every moment in time \citep{schrage_queue_1966}. There is a robust literature analyzing \emph{response time} (elapsed time between a job's arrival and completion) in the M/G/1 queue under many preemptive policies \citep{harchol-balter_performance_2013, scully_soap_2018, scully_new_2022}, shedding light on questions such as how preemption affects the mean and tail of response time, and whether preemption is unfair towards low-priority jobs.

In practice, there is a cost to preempting individual jobs, known as \emph{preemption overhead}. For instance, in computer systems, overhead can take the form of context switch, reloading from disk into memory, or reloading from memory into cache \citep{li_quantifying_2007}. However, with very few exceptions (\cref{prior-work}), the queueing theory literature does not account for preemption overhead. It is thus unclear how overhead affects preemption tradeoffs.

We give the \emph{first transform response time analysis of the M/G/1 queue under preemptive priority scheduling with preemption overhead}. Our overhead model is flexible: overhead amounts are stochastic and can occur whenever a job is paused, resumed, or both. We derive recursive formulas for the Laplace-Stieltjes transform of each priority class's response time distribution, which yield closed-form formulas for all moments of response time.

\subsection{Challenges and New Techniques}


Response time distributions for many priority scheduling policies in the M/G/1 have been analyzed over the years \citep{fajardo_waiting_2017, stanford_waiting_2014, kleinrock_queueing_1976, scully_soap_2018, scully_soap_2018a, wierman_nearly_2005, wierman_scheduling_2008, boxma_tails_2007, harchol-balter_performance_2013, schrage_queue_1966, schrage_queue_1967, osipova_optimal_2009, takacs_delay_1963, vreumingen_queueing_2019, bansal_achievable_2018, nunez-queija_queues_2002, wierman_is_2012, scully_characterizing_2020, lin_heavy-traffic_2011, kamphorst_heavy-traffic_2020, nair_tail-robust_2010}. Yet \emph{none} of this prior work incorporates preemption overhead. Why is it that preemption overhead has been left out despite occurring in the real world?

The main reason is that preemption overhead makes \emph{busy periods} difficult to analyze. Recall that a busy period started by a set of jobs is the total service time of this initial set of "level~0" jobs, "level~1" arrivals that occur during service of level~0 jobs, "level~2" arrivals during level~1 jobs, etc. Busy period transforms are central to many prior M/G/1 response time transform analyses as well as in the study of stability conditions \citep{harchol-balter_performance_2013, scully_soap_2018, kleinrock_queueing_1976, wierman_nearly_2005}. Under preemptive priority scheduling, preemption overhead creates a "two-way" dependency between service length and the number of arrivals during service. For example, when a high-priority job~$J_1$ arrives during service of a low-priority job~$J_2$ and incurs preemption overhead, $J_1$ effectively increases $J_2$'s service time, which in turn could cause more jobs to arrive during~$J_2$. This means that even stability, let alone response time, is hard to characterize under preemption overhead.


We handle the "two-way" dependency using a new theoretical tool we call the \emph{job joint transform}. This is a multivariate transform that captures the joint distribution of a job's effective service time (i.e., including overhead) and the arrivals during its service. Our main insights are that under Poisson arrivals, (1)~job joint transforms are tractable to compute, even for somewhat complex preemption overhead models; (2)~job joint transforms are sufficient to characterize busy period lengths, which is the primary obstacle to analyzing scheduling under preemption overhead; and (3)~job joint transforms unify a variety of other systems with two-way dependencies, such as preempt-repeat models \see\cref{sec:extensions}.

\subsection{Contributions and Outline}
In this paper, we:
\* Introduce the job joint transform for systems with service-arrival dependency \see\cref{sec:JJT-stability}.
\* Analyze busy periods and derive stability conditions for systems with service-arrival dependency \see\cref{sec:JJT-stability}.
\* Express a flexible model of preemptive priority with preemption model as a system with service-arrival dependence \see\cref{sec:JJT-stability}.
\* Derive the job joint transform for our model \see\cref{sec:jjt-derivation}.
\* Derive our response time results for a system with preemption overhead \see\cref{sec:response-time}.
\* Provide expressions for the job joint transform in a number of other service-arrival-dependent systems \see\cref{sec:extensions}.
\*/
We view our work as a first step towards incorporating preemption overhead into broader M/G/1 scheduling theory \see\cref{sec:conclusion}.
This work is the full version of previously published 2- or 3-page abstracts \citep{peng_exact_2022, ramakrishna_transform_2024}. 

\subsection{Prior Work}\label{prior-work}

\subsubsection{Queues with Switching Overheads}

Most literature on overheads in queues covers overheads from switching between \emph{classes} of jobs in a multiclass queue or, roughly equivalently, between \emph{queues} of jobs in polling systems \citep{van_oyen_optimality_1992, neuts_mg1_1977, boon_applications_2011, borst_polling_2018, cao_stability_2017, borst_polling_1997, wang_heavy_2021, vreumingen_queueing_2019, boxma_pseudo-conservation_1987, bertsimas_optimization_1999a}. This work does not capture preemption overhead because almost all the policies studied never switch classes or queues \emph{during} a job's service. To the best of our knowledge, the only exception is the work of \citet{cao_stability_2017}, but they study a \emph{preempt-restart} model, meaning that preempted jobs do not retain progress, with only two job classes.

We are aware of just one prior paper studying scheduling in an M/G/1 with overhead caused by preemption of individual jobs: \citet{goerg_evaluation_1986} studies SRPT with overhead.
However, \citet{goerg_evaluation_1986} only characterizes mean response time, not the full Laplace transform, and has a less flexible preemption model than ours: overhead is deterministic and only occurs when pausing jobs, not resuming them.

\subsubsection{Queues with Service-Arrival Dependencies}\label{prior-work:service-arrival}
The presence of preemption overhead in preemptive-priority systems can be viewed as a complex interdependency between the service length of the preempted job and the number of preemptions resulting from higher-priority arrivals during that job's service. Several prior works have studied a variety of queueing systems where there is some sort of dependency between the arrival process and the service length of jobs. These works derive stability conditions, response times, and other properties of the system. The systems studied include:
\* \emph{Preempt-repeat systems:} Arrivals interrupt service and force jobs to restart under different scheduling policies, including PLCFS and preemptive priority \citep{drekic_reducing_2001, asmussen_preemptive-repeat_2017, bergquist_stationary_2022, schrage_mixed-priority_1969, avi-itzhak_preemptive_1963a, fajardo_waiting_2017, gaver_waiting_1962, chang_preemptive_1965}.
\* \emph{Multiclass queue with state-dependent arrival rates:} Arrival rates depend on the class of the job in service \citep{ernst_stability_2018}.
\*/
All of these systems share structural similarities but differ in the nature of their service-arrival dependencies. In preempt-repeat models, arrivals and preemptions directly increase a job’s service time, whereas in \citet{ernst_stability_2018}, the job currently in service influences the arrival process. Stability in these systems is typically analyzed using (multitype) Galton-Watson branching processes, though the details of the analysis differ from system to system.

Our work introduces a unifying technique that captures all of the above scenarios. The key analytical tool is what we call the \emph{job joint transform}, the multivariable transform of a job's service time and the number of arrivals of each class during said service time. The job joint transform captures all relevant service-arrival dependencies of the system. Our characterizations of the stability condition \see\cref{thm:stability} and busy period lengths \see\cref{thm:busy-period-transform} work for generic job joint transforms, generalizing and extending prior work. In \cref{sec:extensions} we demonstrate how to frame prior work using job joint transforms.


\section{System Model}\label{model}

We consider a multi-class M/G/1 queue. Jobs of classes $\{1,\dots,n\}$ arrive according to independent Poisson processes of positive rates $\lambda_1, \dots, \lambda_n$, and job sizes (a.k.a. service times) for each class are drawn i.i.d. from positive distributions $S_1, \dots, S_n$. This job size $S_k$ excludes preemption overhead \see\cref{model:overhead}, and when we want to emphasize this fact, we call $S_k$ a job's \emph{original} size.

Throughout, we use the shorthand "class~$< k$" to mean "of some class in $\{1, \dots, k - 1\}$", and similarly for $\leq k$, $> k$, and $\geq k$. See also our notation conventions in \cref{model:notation}.

The system uses the \emph{preemptive priority} scheduling policy. This means the server prioritizes working on the job with the lowest class index, i.e. for $i < j$, class~$i$ has priority over class~$j$. Within each class, jobs are served in first-come first-served (FCFS) order. We assume a preempt-resume model, meaning that when a job is preempted, all progress made on it so far is retained.

We write $T_k$ for the response time distribution of class~$k$ jobs. Our main result \see\cref{thm:response-time} is an exact characterization of $\lst{T}_k(\theta) = \E{e^{-\theta T_k}}$, the Laplace-Stieltjes transform of~$T_k$.

\subsection{Preemption Overhead}\label{model:overhead}

Compared to a typical work-conserving M/G/1 model, we add two sources of overhead related to preemption.
\* Whenever a class~$k$ job in service is preempted to make way for a class~$< k$ job, the arriving job triggers \emph{class~$k$ pause overhead}, or simply a "class~$k$ pause".
\* Whenever a previously preempted class~$k$ job is about to enter service again, there is a \emph{class~$k$ resume overhead}, or simply a "class~$k$ resume".
\*/
Both types of overhead are nonpreemptible, so once an overhead starts, no job can enter service until the overhead ends.

The length of each overhead is drawn i.i.d. from a distribution that depends on (a)~whether the overhead is a pause or resume and (b) the class of the job being paused or resumed.
Class~$k$ pauses have length distribution~$C_k$, and class~$k$ resumes have length distribution~$D_k$.


Class-based priority and preemption overhead could in principle interact in many ways.
We assume they interact as described in \cref{alg:pprio_with_overhead}.
In brief, we assume that preemption overheads are nonpreemptible.
But our methods are not specific to the details of \cref{alg:pprio_with_overhead}, as we demonstrate in \cref{sec:extensions}.

There are two key features to notice about the interaction between class-based priority and preemption overhead \see\cref{alg:pprio_with_overhead}:
\* When a class~$k$ job is preempted due to the arrival of a class~$< k$ job, after the resulting pause overhead, the job of best priority enters service. This may or may not be the job whose arrival caused the pause in the first place.
\* When a class~$k$ job is entering service again, if any class~$<k$ jobs arrive during the resulting resume overhead, the class~$k$ job is preempted again immediately after the resume. This triggers another pause immediately after the resume.
\*/
If class~$< k$ arrivals occur during a class~$k$ resume, we say the resume \emph{fails} or describe it as \emph{failed}; otherwise, we say it \emph{succeeds} or describe it as \emph{successful}. Only successful resumes result in the class~$k$ job resuming service towards its original size.


\begin{algorithm}[t]
\caption{Preemptive priority with pause and resume overhead}
\label{alg:pprio_with_overhead}
\small
\setlist[ezlist,2]{label={$\triangleright$}}
\setlist[ezlist,3]{label={$\triangleright$}}
\*[(a), left=0em] Whenever a job arrives to an empty system:
\** Begin serving the job
\* Whenever a class~$k$ job in service is preempted by a class~$< k$ arrival:
\** Stop serving the class~$k$ job
\** Begin a class~$k$ pause
\* Whenever a job or pause completes, as long as there are still jobs in the system:
\** Identify the best-priority job (least class index first, FCFS within each class)
\**[$\diamond$] Is the best-priority job a previously paused class~$k$ job?
\*** If so, begin a class~$k$ resume
\*** Otherwise, begin serving the best-priority job
\* Whenever a class~$k$ resume completes:
\**[$\diamond$] Did any class~$< k$ jobs arrive during the resume?
\*** If so, we say the resume \emph{fails}; begin a class~$k$ pause
\*** Otherwise, we say the resume \emph{succeeds}; begin serving the resumed class~$k$ job
\*/
\end{algorithm}

\subsection{Effective Size and the Job Joint Distribution}\label{model:effective_size}

We consider class~$k$ overheads to be part of the class~$k$ being paused or resumed, and we define a job's \emph{effective size} to be its total service time including overhead, i.e. its original size plus the total length of all of its overheads.
We can make this choice because \emph{overhead occurs only in response to a lower-class job arriving while a higher-class job is being served}.

The overhead dynamics in \cref{alg:pprio_with_overhead}, the arrival rate $\lambda_{< k}$ of class~$< k$ jobs \see\cref{model:notation}, and the distributions $S_k$, $C_k$, and $D_k$ together determine a \emph{joint distribution} $\gp[\big]{R_k, \vec{A}_k}$ on the following quantities (see also \cref{sec:JJT-stability:jjt}):
\* The \emph{effective size} of a class~$k$ job, denoted~$R_k$.
\* For each class~$i$, the \emph{number of class~$i$ arrivals} that occur during a class~$k$ job, denoted~$A_{k, i}$. We write $\vec{A}_k = [A_{k, i}]_{i = 1}^n = [A_{k, 1}, \dots, A_{k, n}]$ for the vector of these numbers of arrivals.
\*/
To clarify, the arrival counts include both arrivals during the job's original work and arrivals during those overheads. Occasionally, we further subdivide $R_k$ as
\[
    R_k = S_k + C^*_k + D^*_k,
\]
where $C^*_k$ and $D^*_k$ are the total length of all pause and resume overheads, respectively, that occur during a class~$k$ job.

We characterize the joint distribution $\gp{R_k, \vec{A}_k}$ in \cref{lem:JJT-complete}. This joint distribution, which we call the \emph{job joint distribution}, is important for our analysis \see\cref{sec:JJT-stability}. We do much of our reasoning about stability and busy period lengths using job joint distributions and other "\emph{service-arrival joint distributions}" of this form.

    One might be concerned that the joint distribution $(R_k, \vec{A}_k)$ might not be well defined in the sense that it might depend on exactly what times the job is in service.
    Fortunately, our model has time-homogeneous Poisson arrivals, so these are well defined (see \cref{sec:local-arrivals}).

\subsection{Types of Load}
\label{model:loads}
We define four types of loads for each class~$k$: one for original work, two for overhead work, and one covering all work.
\* The \emph{class~$k$ original load} is $\sigma_k = \lambda_k \E{S_k}$.
\* The \emph{class~$k$ pause load} is $\gamma_k = \lambda_k \E{C^*_k}$.
\* The \emph{class~$k$ resume load} is $\delta_k = \lambda_k \E{D^*_k}$.
\* The \emph{class~$k$ effective load} is $\rho_k = \sigma_k + \gamma_k + \delta_k = \lambda_k \E{R_k}$.
\*/
We can interpret $\sigma_k$ as the rate at which class~$k$ original work arrives to the system.
Equivalently, if the system is stable, $\sigma_k$ is the fraction of time that the server is in class~$k$ original mode.
The other loads have analogous interpretations.

\subsection{Notation}\label{model:notation}

\newcommand{\BigNotationTable}{%
    \begin{booktabs}{@{}lX>{\crefShorten}l@{}}
        \toprule
        \scshape Notation & \scshape Description & \scshape Defined in \\
        \midrule
        $n$ & number of classes & \cref{model} \\
        $< k$, $\leq k$, $> k$, $\geq k$ & shorthand for sets of classes, e.g. $> k$ is $\{k + 1, \dots, n\}$ & \cref{model} \\
        $\stocheq$ & equality in distribution & \cref{model:notation} \\
        $\lst{V}(\theta) = \E{e^{-\theta V}}$ & Laplace-Stieltjes transform of~$V$ at~$\theta$ & \cref{model:notation} \\
        $V_\e$ & Excess distribution of~$V$ & \cref{def:excess} \\
        \midrule
        $S_k$ & original size of a class~$k$ job (excludes overheads) & \cref{model} \\
        $C_k$ & length of a single class~$k$ pause & \cref{model:overhead} \\
        $D_k$ & length of a single class~$k$ resume & \cref{model:overhead} \\
        $C^*_k$ & total length of all pauses of a class~$k$ job & \cref{model:overhead} \\
        $D^*_k$ & total length of all resumes of a class~$k$ job & \cref{model:overhead} \\
        $R_k = S_k + C^*_k + D^*_k$ & effective size of a class~$k$ job (includes overheads) & \cref{model:overhead} \\
        $T_k$ & response time of a class~$k$ job & \cref{model} \\
        \midrule[dashed]
        $S_{< k}$, $R_{\leq k}$, $T_{> k}$, etc. & mixture of classes, e.g. $R_{\leq k}$ is $R_i$ w.p. $\frac{\lambda_i}{\lambda_{\leq k}}$ for $i \in \{1, \dots, k\}$ \\
        \midrule
        $\lambda_k$ & arrival rate of class~$k$ jobs & \cref{model} \\
        $\sigma_k = \lambda_k \E{S_k}$ & original load for class~$k$ & \cref{model:loads} \\
        $\gamma_k = \lambda_k \E{C^*_k}$ & pause load for class~$k$ & \cref{model:loads} \\
        $\delta_k = \lambda_k \E{D^*_k}$ & resume load for class~$k$ & \cref{model:loads} \\
        $\rho_k = \sigma_k + \gamma_k + \delta_k$ & effective load of class~$k$ & \cref{model:loads} \\
        \midrule[dashed]
        $\lambda_{< k}$, $\sigma_{\leq k}$, $\rho_{> k}$, etc. & sum over classes, e.g. $\rho_{> k} = \sum_{i > k} \rho_i = \sum_{i = k + 1}^n \rho_i$ \\
        $\lambda = \lambda_{\leq n}$, $\rho = \rho_{\leq n}$ & total arrival rate and effective load \\
        \midrule
        $A_{k, \ell}$ & number of class~$\ell$ arrivals during a class~$k$ job & \cref{model:overhead} \\
        $\vec{A}_k = [A_{k, i}]_{i = 1}^n$ & arrival vector, i.e. $[A_{k, 1}, \dots, A_{k, n}]$ & \cref{model:overhead} \\
        \midrule[dashed]
        $A_{< k, \ell}$, $\vec{A}_{\leq k}$, etc. & first-slot inequality denotes mixture of classes (like $S_{< k}$) \\
        $A_{k, < \ell}$, etc. & second-slot inequality denotes sum over classes (like $\lambda_{< \ell}$) \\
        \midrule
        $\jobPair_k = (R_k, \vec{A}_k)$ & class~$k$ job joint distribution & \cref{def:jjt} \\
        $\jjt_k(\theta, \vec{z})$ & class~$k$ job joint transform & \cref{def:jjt} \\
        \midrule[dashed]
        $\jobPair_{< k}$, etc. & mixture of classes, e.g. $\jjt_{\leq k}(\theta, \vec{z}) = \sum_{i \leq k} \frac{\lambda_i}{\lambda_{\leq k}} \jjt_i(\theta, \vec{z})$ \\
        \midrule
        $B_{< k}(V)$ & length of class~$< k$ busy period started by $V$ work & \cref{def:class-busy-period} \\
        $B_{< k}(\jobPair)$ & length of class~$< k$ busy period started by job with joint distribution~$\jobPair$ & \cref{def:class-busy-period} \\
        $B = B_{\leq n}$ & full busy period length & \cref{def:class-busy-period}\\
        \midrule
        $\vec{z} = [z_i]_{i = 1}^n$ & argument to joint transforms \\
        \midrule[dashed]
        $z_{< k}$, etc. & mixture of classes, e.g. $z_{\geq k} = \sum_{i \geq k} \frac{\lambda_i}{\lambda_{\geq k}} z_i$ \\
        \bottomrule
    \end{booktabs}
}

See \cref{app:notation_table} for a summary of notation. A few conventions are worth highlighting to aid interpretation, particularly regarding the use of subscripts such as $< k$ and similar forms, which denote quantities "focused on class~$< k$ jobs." The precise meaning of these subscripts depends on the context:
\* For quantities representing loads, arrival rates, or arrival counts (e.g., $\lambda_{< k}$ and $\sigma_{\leq k}$), the subscript indicates a sum over the specified range of classes. For example, $
\lambda_{< k} = \sum_{i=1}^{k-1} \lambda_i.$
\* For distributions (and their transforms) related to a randomly selected job class (e.g., $S_{< k}$ and $R_{\leq k}$), the subscript represents a mixture distribution where a class~$i$ job is selected with probability proportional to $\lambda_i$. For example, the transform of $S_{\leq k}$ is $\lst{S}_{\leq k}(\theta) = \sum_{i=1}^k \frac{\lambda_i}{\lambda_{\leq k}} \lst{S}_i(\theta).$
\* In the context of joint transforms, which often use arguments $z_1, \hdots, z_n$, the subscript $z_{< k}$ denotes the mixture over classes. When the subscript is omitted, it is understood to correspond to $z_{\leq n}$. For instance, $z_{< k} = \sum_{i=1}^{k-1} \frac{\lambda_i}{\lambda_{< k}} z_i$.
\* Busy periods fall outside the above conventions. In these cases, the $< k$ subscript indicates a class~$< k$ busy period \cref{sec:tree:response-time}, a busy period consisting of only class~$< k$ jobs' arrivals and service.
\*/
These conventions have been selected for internal consistency and to yield intuitive expressions. For example, although we define $\sigma_{\leq k} = \sum_{i=1}^k \lambda_i \E{S_i}$ directly, it follows from the first two conventions that $\sigma_{\leq k} = \lambda_{\leq k} \E{S_{\leq k}}$.
For the total arrival rate and total effective load, we write simply $\lambda = \lambda_{\leq n}$ and $\rho = \rho_{\leq n}$.

Throughout the paper, definitions and results stated with $\leq k$ have a $< k$ analogue, and vice versa: one can simply plug $k \pm 1$.
The same holds for $> k$ and $\geq k$.
When stating definitions and results, we generally choose whichever strictness is most commonly used, but we also use the opposite without further comment.

We use random variable names and their distributions interchangeably; for instance, we write "distributed as $S_k$" to mean "distributed as the distribution of $S_k$."

Finally, we state some standard definitions and theorems used throughout the paper.

\begin{definition}
    For a nonnegative real-valued random variable $V$, the Laplace-Stieltjes transform of $V$ evaluated at $\theta \geq 0$ is defined as
    \[
    \lst V(\theta) = \E{e^{-\theta V}}.
    \]
\end{definition}

\begin{definition}
    We write $X \stocheq Y$ to denote that random variables $X$ and $Y$ are equal in distribution.
\end{definition}

\begin{definition}\label{def:excess}
The \emph{excess distribution} of a nonnegative real-valued random variable distributed as $V$, denoted by $V_\e$, is:
\[
\P*{V_\e < t} = \frac{1}{\E{V}}\int_0^t \P{V > x}dx,
\]
with Laplace-Stieltjes transform
\[
\lst V_e(\theta) = \frac{1- \lst V(\theta)}{\theta \E{V}}.
\]
This is also known as the forward/backward recurrence time of $V$.
\end{definition}

\begin{theorem}[Wald’s equation]\label{thm:wald}
Let $\{Y_i\}_{i\geq 1}$ be i.i.d. random variables with finite mean, and let $N$ be a stopping time
with respect to $\{Y_i\}$ satisfying $\E{N}<\infty$. Then
\[
\E*{\sum_{i=1}^{N} Y_i} = \E{N}\E{Y_1}.
\]
\end{theorem}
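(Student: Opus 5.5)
The statement is Wald's equation, and I would prove it by writing the random sum as a series with indicator weights and then exchanging expectation and summation. Write
\[
\sum_{i=1}^{N} Y_i = \sum_{i=1}^{\infty} Y_i \,\mathbb{1}\{N \geq i\}.
\]
The key observation is that, because $N$ is a stopping time with respect to $\{Y_i\}$, the event $\{N \geq i\} = \{N \leq i-1\}^c$ is determined by $Y_1, \dots, Y_{i-1}$ (possibly together with independent extra randomization). Since the $Y_i$ are i.i.d., $Y_i$ is independent of $(Y_1,\dots,Y_{i-1})$, and hence of $\mathbb{1}\{N \geq i\}$. This gives $\E{Y_i \mathbb{1}\{N\geq i\}} = \E{Y_1}\P{N \geq i}$ for every~$i$.

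The main obstacle is justifying the interchange of $\E{\cdot}$ with the infinite sum, since the $Y_i$ may take both signs. I would first run the same computation with $|Y_i|$ in place of $Y_i$. All terms are then nonnegative, so Tonelli's theorem (monotone convergence) gives
\[
\E*{\sum_{i=1}^\infty |Y_i|\,\mathbb{1}\{N\geq i\}} = \E{|Y_1|}\sum_{i\geq 1}\P{N\geq i} = \E{|Y_1|}\E{N} < \infty ,
\]
where finiteness uses the two hypotheses $\E{|Y_1|}<\infty$ and $\E{N}<\infty$. This integrable dominating function licenses Fubini's theorem (or dominated convergence on the partial sums) for the signed series.

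With the interchange justified, we get
\[
\E*{\sum_{i=1}^N Y_i} = \sum_{i\ge1}\E{Y_1}\P{N\ge i} = \E{Y_1}\E{N},
\]
using the tail-sum formula $\E{N}=\sum_{i\geq1}\P{N\geq i}$ for nonnegative integer-valued~$N$. The only delicate points are the independence of $Y_i$ from $\{N\ge i\}$, which is exactly where the stopping-time property enters, and the absolute-convergence step.
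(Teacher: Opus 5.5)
Your proof is correct. It is the standard textbook argument: write the random sum as $\sum_{i\ge 1} Y_i$ times the indicator of $\{N\ge i\}$, note that this event lies in the $\sigma$-algebra generated by $Y_1,\dots,Y_{i-1}$ and is therefore independent of $Y_i$, and justify the interchange of expectation and summation by first running the computation with $|Y_i|$ via Tonelli.

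The paper does not prove this statement at all; it quotes Wald's equation as a standard result, so there is no alternative route to compare. One step in your write-up is left implicit: the partial sums converge to $\sum_{i=1}^{N} Y_i$. This holds because $\E{N}<\infty$ forces $N<\infty$ almost surely, after which dominated convergence finishes the argument exactly as you describe.
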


\section{Stability and Busy Period Lengths with Overhead}\label{sec:JJT-stability}

In this section, we prove stability results for our model with preemption overhead. Ideally, we would like for standard M/G/1 stability results to hold for our model in a straightforward manner, but this is not possible. This is because preemption overhead introduces what we call \emph{service-arrival dependence}, in the sense that the amount of work required for a job to complete is not fixed in advance, because arrivals that occur while the job is in service may trigger overhead and thereby extend the time until the job completes. Consequently, standard M/G/1 arguments that condition on a fixed service length $s$ and then treat the number of arrivals during that service as an independent Poisson random variable distributed as $\mathrm{Poisson}(\lambda s)$ do not apply in our setting.

Our approach to deal with this service-arrival dependency is to represent a busy period as a branching object that records \emph{which jobs arrive during the effective service time of which other jobs}. This leads to a multitype Galton-Watson representation of the set of jobs served during a busy period. To analyze such branching structures in the presence of service-arrival dependence, we introduce a primitive, the \emph{job joint distribution}, which captures the joint law of a job's effective service time and the vector of arrivals that occur during it.

This section proceeds as follows. In \cref{sec:JJT-stability:trees} we define busy period trees and explain how they encode the recursive structure and service-arrival dependency of busy periods. In \cref{sec:JJT-stability:jjt} we formalize the service-arrival joint distribution and the joint Laplace-Stieltjes transform of service legth and number of arrivals, called the \emph{job joint transform}. In \cref{sec:JJT-stability:bp} we use the job joint transform to derive fixed-point equations for busy period transforms, and we extract stability conditions and formulas for the expected lengths of busy periods and other related values. In \cref{sec:JJT-stability:assumptions} we describe the extent to which our results generalize to cover other systems with service-arrival dependency, including other overhead models. Finally, we collect proofs and technical conventions in \cref{sec:JJT-stability:proofs}.

\subsection{Busy Period Trees}\label{sec:JJT-stability:trees}

A \emph{full busy period}, denoted $B$, is the amount of time between two consecutive epochs during which the server is empty (i.e., the elapsed time between when a job arrives to an empty server and when the server next becomes empty). As usual for M/G/1-type queueing systems, to show the system is stable, we would like to show the idle state is positive-recurrent. Therefore, it suffices to show that the length distribution of a fully busy period satisfies $\E{B} < \infty$. Proving this suffices because it implies a renewal process structure, with each full busy period and subsequent idle period forming a renewal cycle of finite length $\E{B} + 1/\lambda$. By Wald's Equation\footnote{This argument additionally requires that the number of jobs in a busy period is a stopping time and that $\E{R_k} < \infty$ for all classes $k$, both of which hold in our model.} \see\cref{thm:wald}, showing stability reduces to showing that the expected number of jobs served during a full busy period is finite.

We show this by analyzing a busy period as a multi-type Galton Watson branching process. We do this by describing the set of jobs served during a busy period using a tree representation that records \emph{which jobs arrive during the effective service of which other jobs}.  We formalize this notion below.

\begin{figure}
    \centering
    \includegraphics[page = 1, width=\linewidth, trim= 40 0 40 10, clip]{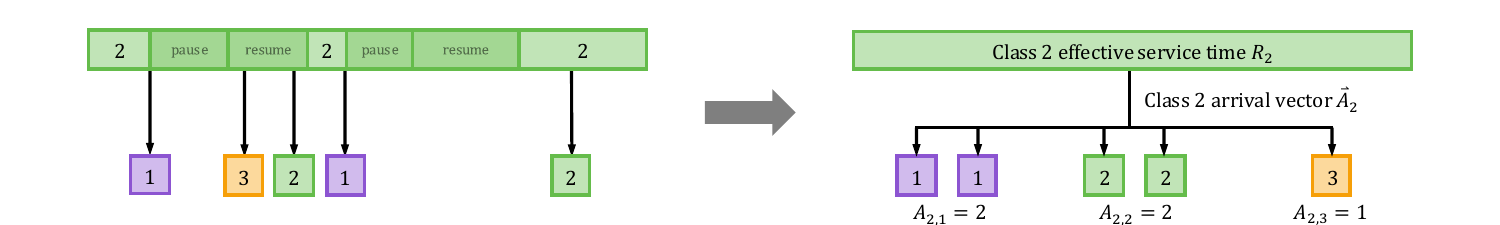}
    \caption{
    (Left)~The service history of a class~$2$ job represented via a busy period tree. Shaded regions mark periods of overhead. Arrows mark exactly when arrivals occurred during the root job's effective service time.
    (Right)~The corresponding realization of the job joint distribution $\calJ_2 = (R_2, \vec{A}_2)$
    \label{fig:multi-type}}
\end{figure}

\begin{definition}
A \emph{busy period tree} is a directed, rooted tree in which nodes of the tree correspond to the jobs served during the busy period, and each node is labeled by the class in $\{1,\ldots,n\}$ of the corresponding job.

For any two jobs $x$ and $y$ served during the busy period, there is a directed edge $x \to y$ if job $y$ arrived during the effective service time of job $x$, i.e., during the time interval over which $x$ receives its own service together with any pause and resume overhead attributed to $x$ being preempted.
The \emph{root} of the tree is the job whose arrival initiates the busy period.
\end{definition}

The ``busy period tree'' tree perspective is convenient for our analysis for two main reasons. First, it is a multitype Galton-Watson branching process, which allows us to apply existing tools to analyze its extinction and total progeny, which will yield the stability conditions we need. Second, this tree representation helps us reason in terms of "class-restricted" trees when analyzing response time \see\cref{sec:JJT-stability:restricted}.

\subsection{Characterizing Busy Periods Using the Job Joint Distribution}\label{sec:JJT-stability:jjt}

Now that we have described how the set of jobs served during a full busy period can be represented as a multitype branching process, we make this description more precise by explaining exactly how a busy period tree is generated in \ref{alg:busy_period_tree}.

\begin{algorithm}[t]
\caption{Generation of a busy period tree}
\label{alg:busy_period_tree}
\small
\setlist[ezlist,1]{label={}, wide}
\setlist[ezlist,2]{label={$\triangleright$}}
\setlist[ezlist,3]{label={$\triangleright$}}
\* Generate root of the busy period tree:
\** Assign the root class $k \in \{1, \hdots, n\}$ with probability $\lambda_k/\lambda$.
\* Generate offspring recursively for each node:
\** For a node of class~$k$, sample the service-arrival joint distribution $(R_k, \vec{A}_k) = (R_k, A_{k,1}, \hdots, A_{k,n})$.
\** For each class $i \in \{1, \hdots, n\}$, create $A_{k,i}$ class~$i$~offspring of the parent class~$k$ node.
\** For each offspring node, repeat the offspring generation process independently.
\*/
\end{algorithm}



Because arrivals are homogeneous Poisson processes and the scheduling policy is nonanticipative, all class~$k$ jobs have the same arrival vector distribution, and arrival vectors associated with different nodes in the busy period tree are independent conditional on their classes. It follows that the collection of jobs served during a full busy period forms a multitype Galton-Watson branching process with types $\{1,\ldots,n\}$ and offspring distribution for a type-$k$ node given by $\vec{A}_k$.

Our observation about branching processes already suffices to characterize stability. Since the number of jobs in a busy period is a stopping time and the expected service time of each job is finite, Wald’s equation \see\cref{thm:wald} implies that finiteness of the expected number of jobs in a busy period implies finiteness of the expected busy period length. By standard multitype Galton-Watson theory \citep[Chapter~V]{athreya_branching_1972}, extinction of the branching process (equivalently, finiteness of the total expected number of jobs in a busy period) is fully determined by the largest eigenvalue of the \emph{mean offspring matrix}
\[
M = \begin{bmatrix}
    \E{A_{1, 1}} & \E{A_{1,2}} & \hdots & \E{A_{1,n}} \\
    \E{A_{2,1}} & \E{A_{2,2}} & \hdots & \E{A_{2,n}} \\
    \vdots & \vdots & \ddots & \vdots \\
    \E{A_{n,1}} & \E{A_{n,2}} & \hdots & \E{A_{n,n}}
    \end{bmatrix}.
\]
Thus, the arrival vectors $\vec{A}_k$ alone are sufficient to characterize stability.

However, to characterize the exact \emph{distribution} of busy period lengths, which we need for response-time analysis, the arrival vectors are no longer sufficient. We need the joint distribution of a job’s effective service time and its arrival vector, not just their respective marginals. Because of the two-way dependency between these values, it is not quite obvious at first glance what this distribution should be. In the rest of this section, we will reduce the key questions to computing this distribution, and actually compute it in \cref{sec:jjt-derivation}.

In the standard M/G/1 queue without overhead, busy period transforms are defined recursively from the Laplace-Stieltjes transform of the service time distribution, since arrivals during a fixed service interval are conditionally Poisson. In our setting, an analogous recursive description still holds, but it must be expressed in terms of a joint transform that captures the service-arrival dependence.

Accordingly, we introduce the notion of a \emph{service-arrival joint distribution}, which specifies a joint distribution of a nonnegative duration and the vector of arrivals observed during that duration, and its associated joint Laplace-Stieltjes transform.

\begin{definition}
\*[subenv, afterheading]
\*A \emph{service-arrival joint distribution} $\calV = (R_\calV, \vec{A}_\calV)$ is a joint distribution in the space $(\bbR_+,\bbZ_+^n)$.
Given a service-arrival joint distribution $\calV$, we write $R_\calV$ for the \emph{work duration component} and $\vec{A}_\calV = (A_{\calV,1}, \hdots, A_{\calV,n})$ for the \emph{arrival count vector component}.
In general, $R_\calV$ and $\vec{A}_\calV$ are dependent.
\* Given $\calV$, the \emph{service-arrival joint transform}, denoted $\lst{\calV}$, is
\[
\lst{\calV}(\theta,\vec{z}) \coloneqq \E*{e^{-\theta R_\calV}\prod_{i=1}^n z_i^{A_{\calV,i}}}.
\]
\*/
We use the following notation conventions for service-arrival joint distributions:
\* We use calligraphic font for service-arrival joint distributions (except for the shorthand introduced in \cref{def:poisson-sajd}).
\* Service-arrival joint distributions are added pointwise: $\calV + \calW = (R_\calV + R_\calW, \vec{A}_\calV + \vec{A}_\calW)$.
\*/
\end{definition}

The most important instance of the service-arrival joint distribution arises when the work duration component $R_\calV$ represents the effective service time of a class~$k$ job. We call this service-arrival joint distribution the \emph{class~$k$ job joint distribution}, which captures the joint distribution of the effective service time of a class~$k$ job and the number of arrivals of each class during its effective service.\footnote{See \cref{sec:JJT-stability:assumptions} for why this joint distribution is well defined.}

\begin{definition}\label{def:jjt}
\*[subenv, afterheading]
\* The \emph{class~$k$ job joint distribution}, $\jobPair_k = (R_k, \vec{A}_k)$ is the service-arrival joint distribution of the class~$k$ effective service time $R_k$ and its associated class~$k$ arrival vector  $\vec{A}_k = (A_{k,1},\hdots,A_{k,n})$.
\* The \emph{class~$k$ job joint transform} is $\jjt_k$, the service-arrival joint transform of $\jobPair_k$.
\*/
\end{definition}

Aside from the job joint distributions $\jobPair_k$, the service-arrival joint distributions we use most frequently are those that arise from Poisson arrivals without any overhead or other non-standard dependence.

\begin{definition}\label{def:poisson-sajd}
\* [subenv, afterheading]
\* Given a duration distribution $V$, the \emph{standard Poisson joint distribution}, denoted $\calP(V)$, has $R_{\calP(V)} = V$ and $(A_{\calP(V), i}\given V) \sim \text{Poisson}(\lambda_i V)$, where all $A_{\calP(V), i}$ variables are mutually conditionally independent given~$V$.
\* The \emph{standard Poisson joint transform} is $\lst{\calP(V)}$, the service-arrival joint transform of $\calP(V)$.
\*/

Because $\calP$ is the "default" way to turn a duration distribution into a service-arrival joint distribution, we typically shorten $\calP(V)$ to simply $V$ to reduce clutter, as in $B(V + \jobPair) = B(\calP(V) + \jobPair)$.
\end{definition}

We give a formula for the standard Poisson joint transform in \cref{lem:poisson-standard-joint-tfm}.
It appears in some form throughout the queueing literature, though usually not explicitly stated, so we include the proof there for completeness.

In the next section, we will use job joint transforms to derive fixed-point equations for busy period transforms.

\subsection{Busy Period Lengths}\label{sec:JJT-stability:bp}

We now use the service-arrival joint distributions and transforms introduced in \cref{sec:JJT-stability:jjt} to characterize busy period lengths. We first analyze \emph{full} busy periods, which yield stability conditions and mean busy period lengths. Later in this section, we analyze "class-restricted busy periods", which we will need for response-time analysis.

A busy period can be viewed as the total amount of work generated, either directly or indirectly, by an initial amount of work. In systems with service-arrival dependence, this includes both the effective service time of the initiating job and all work generated by arrivals during that service, recursively.

\begin{definition}\label{def:class-busy-period}
A \emph{busy period started by a service-arrival joint distribution $\calV$},
denoted $B(\calV)$, is the distribution satisfying
\[
B(\calV) \stocheq
R_\calV + \sum_{i=1}^n \sum_{j=1}^{A_{\calV,i}} B(\jobPair_i)^{(j)},
\]
where $\{B(\jobPair_i)^{(j)}\}$ are independent copies of $B(\jobPair_i)$ that are
independent of $(R_\calV,\vec A_\calV)$.
\end{definition}

The recursive definition of a busy period  corresponds exactly to the busy period tree representation introduced in \cref{sec:JJT-stability:trees}: the root corresponds to the initial work duration $R_\calV$, and each arrival during that duration initiates an independent descendant busy period whose distribution depends only on the class of its root job.

We now characterize the distribution of $B(\calV)$ via its Laplace-Stieltjes transform. This replaces the classical M/G/1 busy period identity, which relies on independence between service durations and arrivals and therefore does not apply in our setting.

\restatably\begin{theorem}[Busy Period Transform]\label{thm:busy-period-transform}
Consider $\theta \ge 0$, and let $\vec b$ be the vector defined by
\[
b_i = \lst{B(\jobPair_i)}(\theta).
\]
\*[subenv]
\* The busy period initiated by service-arrival joint distribution~$\calV$ has transform
\[
\lst{B(\calV)}(\theta) = \lst{\calV}\!\left(\theta, \vec b\right).
\]
\* The vector $\vec b$ is the least nonnegative solution to
\[
b_i = \jjt_i\!\left(\theta, \vec b\right), \qquad i=1,\ldots,n.
\]
\*/
\end{theorem}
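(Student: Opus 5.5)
For part (a), I would condition on the initiating pair $(R_\calV,\vec A_\calV)$ in the recursive identity of \cref{def:class-busy-period}. Given $R_\calV = r$ and $\vec A_\calV = \vec a$, the busy period equals $r$ plus a sum of $a_i$ independent copies of $B(\jobPair_i)$ for each $i$, and these copies are independent of the conditioning pair. The conditional transform therefore factors as
\[
\E*{e^{-\theta B(\calV)} \given R_\calV = r, \vec A_\calV = \vec a} = e^{-\theta r}\prod_{i=1}^n b_i^{a_i}.
\]
Taking expectations gives $\lst{\calV}(\theta,\vec b)$ directly from the definition of the service-arrival joint transform. For $\theta\ge0$ everything lies in $[0,1]$, so there are no convergence issues. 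The one point to check is what happens if $B(\jobPair_i)=\infty$ with positive probability. In that case I interpret $e^{-\theta\cdot\infty}=0$ and $b_i$ as the transform of a possibly defective distribution, and the factorization still holds.

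Applying (a) with $\calV=\jobPair_i$ immediately shows that $\vec b$ solves $b_i=\jjt_i(\theta,\vec b)$. The substance of part (b) is minimality. Here I would use the busy period tree from \cref{alg:busy_period_tree} and truncate it by generation. Define $B^{(m)}(\jobPair_i)$ as the total effective service time of the nodes at depth $< m$ in a tree rooted at a class~$i$ node, and declare the truncation "dead" if any node exists at depth $m$. Let $b_i^{(m)} = \E{e^{-\theta B^{(m)}}\,\mathbf 1\{\text{no node at depth } m\}}$. Then $b^{(0)}=0$, and conditioning on the root's pair gives the recursion $b^{(m+1)}_i=\jjt_i(\theta,\vec b^{(m)})$. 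Define the map $F(\vec x)_i=\jjt_i(\theta,\vec x)$; it is monotone nondecreasing on $[0,1]^n$.

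Now let $\vec x\ge0$ be any solution of $\vec x = F(\vec x)$. Since $\vec b^{(0)} = 0 \le \vec x$, induction gives $\vec b^{(m)}=F(\vec b^{(m)-1})\le F(\vec x)=\vec x$ for every $m$. The event $\{\text{no node at depth } m\}$ increases to $\{\text{tree finite}\}$, and on that event $B^{(m)}$ eventually equals $B$. By monotone (or dominated) convergence, $b_i^{(m)}\uparrow \E{e^{-\theta B(\jobPair_i)}\mathbf 1\{B<\infty\}}=b_i$. Hence $\vec b\le\vec x$, and $\vec b$ is the least nonnegative fixed point.

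I expect the main obstacle to be the limit step. I need the truncated quantities to increase monotonically, which follows because the indicator increases in $m$ while $e^{-\theta B^{(m)}}$ decreases. This tension means I cannot quote monotone convergence directly. Instead I would argue pointwise convergence of the product to $e^{-\theta B}\mathbf 1\{B<\infty\}$ and apply dominated convergence with bound $1$. Monotonicity of $b^{(m)}$ in $m$ is not actually needed, only the bound $\vec b^{(m)}\le \vec x$ together with convergence. A secondary subtlety is the case $\theta=0$, where the fixed-point equation becomes the classical extinction-probability equation of multitype Galton–Watson theory \citep{athreya_branching_1972}. Also, solutions exceeding $1$ might exist, but that does not affect the "least nonnegative" claim.
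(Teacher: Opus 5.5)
Your proposal is correct and follows essentially the same route as the paper. Part (a) conditions on the initiating pair, which the paper phrases via collective marks. Part (b) truncates the busy period tree by depth, shows the truncated transforms obey the fixed-point recursion started from the bottom, bounds every iterate by an arbitrary nonnegative fixed point using induction and monotonicity, and passes to the limit; your indexing is shifted by one relative to the paper's, whose iteration starts at the job joint transform evaluated at $\vec z = \vec 0$. Your worry about the limit step is unnecessary: on the event that there is no node at depth $m$ one already has $B^{(m)} = B$, so $e^{-\theta B^{(m)}}\mathbf{1}\{\text{no node at depth } m\}$ is pointwise nondecreasing in $m$ and monotone convergence applies directly, though your dominated-convergence argument is equally valid.
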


As established in \cref{sec:JJT-stability:trees}, the set of jobs served during a
full busy period forms a multitype Galton-Watson branching process, where a node
of class~$i$ produces $A_{i,j}$ children of class~$j$. Standard branching-process
theory \citep[eq.~(II.4.1)]{harris_theory_1963} implies that the expected total number of jobs in a busy period is finite if and only if the spectral radius of the mean offspring matrix
$M = (\E{A_{i,j}})_{i,j}$ is strictly less than~$1$.

Fortunately in our case, the mean offspring matrix is rank one. This is because the system has time-homogeneous Poisson arrivals, meaning despite service-arrival dependence, we can show the following:

\restatably\begin{lemma}\label{lem:num-arrivals}
For all classes $i,j$, we have
\[
\E{A_{i,j}} = \lambda_j \E{R_i}.
\]
\end{lemma}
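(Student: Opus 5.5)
The plan is a martingale argument. Fix a class~$i$ job and consider the interval over which it is in effective service. Let $N_j(t)$ be the class~$j$ Poisson arrival process, and set $M_j(t) = N_j(t) - \lambda_j t$; this is a martingale with respect to the natural filtration $\mathcal{F}_t$ generated by all arrivals and all service and overhead randomness revealed up to time~$t$. The effective service of the job is not one contiguous interval, since the job is interrupted by other jobs' service. I will therefore measure time on the job's own clock. Let $\tau(u)$ be the real time at which the job has accumulated $u$ units of effective service (original work plus its own pauses and resumes). Then $A_{i,j}$ is the number of class~$j$ arrivals during real-time instants at which the job is in effective service. That is,
\[
A_{i,j} = \int \mathbf{1}\{\text{job in effective service at } t\}\, dN_j(t),
\qquad
R_i = \int \mathbf{1}\{\text{job in effective service at } t\}\, dt .
\]

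The key step is to observe that the indicator $H(t) = \mathbf{1}\{\text{job in effective service at } t\}$ is predictable. Whether the job is being served (or undergoing its own overhead) at time~$t$ is determined by the history strictly before~$t$, because the policy in \cref{alg:pprio_with_overhead} is nonanticipative: it never uses future arrival times. This holds even though pause and resume durations depend on arrivals. A failed resume is still counted in $R_i$ and its arrivals in $A_{i,j}$, and deciding that the job is in a resume at time~$t$ uses only the past. Hence $\int_0^{t} H\,dM_j$ is a martingale. Applying optional stopping at the job's completion time, which is a stopping time, gives $\E{A_{i,j}} - \lambda_j \E{R_i} = 0$, provided integrability holds.

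The main obstacle is justifying optional stopping, since $R_i$ may a priori be infinite or lack finite mean, for example when resumes fail repeatedly. I would first stop at the bounded time $T\wedge t_0$ to get $\E{A_{i,j}(T\wedge t_0)} = \lambda_j \E{R_i(T\wedge t_0)}$, where $A_{i,j}(\cdot)$ and $R_i(\cdot)$ denote the truncated counts and durations. Both sides are monotone in $t_0$, so monotone convergence yields the identity in $[0,\infty]$. This gives the identity (possibly as $\infty=\infty$) without assuming finiteness, which is all the lemma needs.

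As an alternative presentation closer to the paper's style, I could use the local-arrival description from \cref{sec:local-arrivals}. Time-homogeneity lets us splice together the job's effective-service intervals into one time-homogeneous Poisson stream on the job's own clock, and Wald's identity for a Poisson process stopped at the random time $R_i$ then gives the same conclusion.
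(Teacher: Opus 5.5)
Your argument is correct, but it takes a different route from the paper. The paper's proof is a one-line renewal-reward argument. It considers a fictitious system in which i.i.d.\ class~$i$ jobs are served back-to-back on their own clock while the Poisson streams run, so each job is a renewal cycle of length $R_i$ with reward $A_{i,j}$. It then equates the long-run class~$j$ arrival rate $\E{A_{i,j}}/\E{R_i}$ with $\lambda_j$. You instead work with a single job: you use that $N_j(t)-\lambda_j t$ is a martingale and that the effective-service indicator is predictable, then truncate and pass to the limit by monotone convergence.

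Your route has three advantages. It needs no i.i.d.\ cycle structure, which the paper gets tacitly from the independent increments of the Poisson streams at job boundaries. It needs no assumption $\E{R_i}<\infty$: renewal reward requires a finite mean cycle, whereas you get the identity in $[0,\infty]$. And it isolates exactly what is being used, namely a compensator $\lambda_j t$ integrated against a nonanticipating integrand. That also explains why the identity fails for the class-dependent arrival rates in \cref{sec:extensions:other}. The paper's route is shorter and stays in the elementary renewal language used elsewhere in the paper.

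One caution about your main, real-time version. It identifies $\int H\,dt$ and $\int H\,dN_j$ with $R_i$ and $A_{i,j}$ only if the tagged job eventually completes, and that is not guaranteed without stability. For example, if $\rho_{<i}>1$, a paused job is never resumed with positive probability. The job joint distribution, by contrast, is defined intrinsically on the job's own clock, and this lemma is what feeds the stability condition, so relying on completion would be circular. Your own-clock alternative avoids the problem: Wald's identity for a Poisson process stopped at the own-clock stopping time $R_i$. It is really the single-cycle analogue of the paper's back-to-back construction, so that is the version to present.
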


Thus $M$ is the outer product of two vectors. Specifically, $M$ is a rank-one matrix of the form
\[
M = \E{\vec R}\,\vec\lambda^{\top},
\]
meaning its unique nonzero eigenvalue is
\[
\rho = \E{\vec R}^{\top}\,\vec\lambda = \sum_{k=1}^n \lambda_k \E{R_k},
\]
the \emph{total effective load} defined in \cref{model:loads}. We therefore obtain
the usual stability condition.

\begin{theorem}[Stability Conditions]\label{thm:stability}
An $n$-class system with class~$k$ Poisson arrivals at rate $\lambda_k$ and class~$k$
jobs with job joint distribution $(R_k,\vec A_k)$ is stable (positive recurrent) if
$\rho < 1$.
\end{theorem}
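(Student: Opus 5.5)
The plan is to reduce stability to finiteness of the expected busy period length. From there the argument chains together the branching-process facts already set up in this section. First, Lemma~\ref{lem:num-arrivals} gives $\E{A_{i,j}} = \lambda_j \E{R_i}$, so the mean offspring matrix is $M = \E{\vec R}\,\vec\lambda^{\top}$. The spectral radius of this rank-one matrix is $\vec\lambda^{\top}\E{\vec R} = \rho$. If $\rho<1$, standard multitype Galton--Watson theory \citep[eq.~(II.4.1)]{harris_theory_1963} says the expected total progeny is finite. Concretely, the expected number of class-$j$ descendants of a class-$i$ root is the $(i,j)$ entry of $\sum_{m\ge 0} M^m = (I-M)^{-1}$, and this series converges precisely when the spectral radius is below one. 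Here we can even compute it in closed form, since $M^m = \rho^{m-1} M$ for $m\ge1$:
\[
(I-M)^{-1} = I + \frac{1}{1-\rho} M .
\]

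Next, I will convert a finite expected number of jobs into a finite expected busy period length. The busy period is the sum of the effective sizes $R$ of all jobs in the tree. I will order the nodes in breadth-first order and treat the number of nodes $N$ as a stopping time, so that Wald's equation (\cref{thm:wald}) applies classwise. Alternatively, I can compute directly by linearity:
\[
\E{B(\jobPair_i)} = \sum_j [(I-M)^{-1}]_{ij}\,\E{R_j}.
\]
This is finite whenever every $\E{R_j}<\infty$. That gives $\E{B}=\sum_i \frac{\lambda_i}{\lambda}\E{B(\jobPair_i)}$, which simplifies to $\E{R}/(1-\rho)$, where $R$ is the $\lambda$-mixture of the $R_j$. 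Then the renewal argument from \cref{sec:JJT-stability:trees} finishes the proof. Cycles consisting of a busy period followed by an idle period of mean $1/\lambda$ are i.i.d. with finite mean, so the empty state is positive recurrent.

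The main obstacle is justifying the independence structure that both the Galton--Watson claim and Wald require. The issue is that $R_i$ and $\vec A_i$ are dependent within a node, but the subtrees hanging off distinct children must be independent given their classes. I would argue this from the strong Markov property of the Poisson processes together with nonanticipativity of the policy. Each job's effective service consists of time intervals, and the arrivals in those intervals are fresh. So I would use a sum over nodes with conditional expectation given the node's class, taking $\E{R_j}$ per node, and avoid relying on i.i.d. summands.

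A secondary point is that $\E{R_j}<\infty$ itself needs checking. I would take it as a standing assumption, as in the footnote, or note that it is implied: by Lemma~\ref{lem:num-arrivals}, $\rho<1$ forces each $\lambda_j\E{R_j}<1$.
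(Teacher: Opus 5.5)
Your proposal is correct and follows essentially the same route as the paper: \cref{lem:num-arrivals} makes the mean offspring matrix $M$ rank one with nonzero eigenvalue $\rho$, multitype Galton--Watson theory then gives finite expected progeny when $\rho<1$, and summing effective sizes over nodes (the paper's Wald step) yields a finite mean busy period, hence positive recurrence via the busy/idle renewal cycle. Your explicit Neumann series $(I-M)^{-1} = I + M/(1-\rho)$ and your per-node conditioning on class are a slightly more careful version of the paper's Sherman--Morrison computation and its Wald footnote; one small note is that finiteness of each mean effective size follows directly from $\rho<1$ with all $\lambda_j>0$, so no lemma is needed for that step.
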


Finally, we compute the expected length of busy periods. Remarkably, these expressions match those of the standard M/G/1 queue, with service times replaced by effective service
times.

\restatably\begin{theorem}\label{corr:busy-period-length}
For a service-arrival joint distribution $\calV$, the busy period initiated by
$\calV$ has mean
\[
\E{B(\calV)} = \E{R_\calV}
+ \sum_{j=1}^{n} \E{A_{\calV,j}} \E{B(\jobPair_j)}.
\]
Moreover, the busy period initiated by a class~$k$ job satisfies
\[
\E{B(\jobPair_k)} = \frac{\E{R_k}}{1-\rho}.
\]
\end{theorem}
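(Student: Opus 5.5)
We take expectations in the recursive distributional identity of \cref{def:class-busy-period} to get the first formula. Then we specialize to $\calV = \jobPair_k$ and solve the resulting linear system using the rank-one structure from \cref{lem:num-arrivals}. The plan assumes $\rho < 1$, so that the busy period is a.s.\ finite with finite mean (\cref{thm:stability} and the branching-process discussion preceding it).

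For the first claim, we start from
\[
B(\calV) \stocheq R_\calV + \sum_{i=1}^n \sum_{j=1}^{A_{\calV,i}} B(\jobPair_i)^{(j)},
\]
where the copies $B(\jobPair_i)^{(j)}$ are independent of $(R_\calV,\vec A_\calV)$. We condition on $\vec A_\calV$, or equivalently apply Wald's equation (\cref{thm:wald}) to each inner sum; independence makes this an elementary tower-property computation. All terms are nonnegative, so Tonelli justifies exchanging sum and expectation even if some terms are infinite. This yields $\E{B(\calV)} = \E{R_\calV} + \sum_j \E{A_{\calV,j}}\E{B(\jobPair_j)}$ as an identity in $[0,\infty]$. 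Alternatively, we could differentiate \cref{thm:busy-period-transform} at $\theta=0$, but the direct route avoids interchanging limits with the least-fixed-point characterization.

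For the second claim, we set $\calV=\jobPair_k$ and write $x_k = \E{B(\jobPair_k)}$. Using \cref{lem:num-arrivals}, $\E{A_{k,j}} = \lambda_j\E{R_k}$, so
\[
x_k = \E{R_k} + \E{R_k}\sum_j \lambda_j x_j = \E{R_k}\,(1+L), \qquad L \coloneqq \sum_j \lambda_j x_j.
\]
Multiplying by $\lambda_k$ and summing over $k$ gives $L = \rho(1+L)$. Hence $L = \rho/(1-\rho)$ and $x_k = \E{R_k}/(1-\rho)$.

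The main obstacle is justifying that $L<\infty$. If $L$ were infinite, $L=\rho(1+L)$ would hold trivially and could not be solved for $L$. We get finiteness from the stability argument: when $\rho<1$, the spectral radius of $M$ is $\rho<1$, so the expected total progeny of the multitype Galton--Watson tree is finite. Since $\E{R_i}<\infty$ for all $i$, Wald's equation over the jobs in the tree gives $\E{B(\jobPair_k)}<\infty$.

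To avoid relying on that external result, we can instead argue directly by truncation. Let $x_k^{(m)}$ be the expected total effective service of nodes at depth $\le m$ in the tree rooted at a class-$k$ job. The same computation gives
\[
x_k^{(m+1)} = \E{R_k}\Bigl(1+\sum_j\lambda_j x_j^{(m)}\Bigr).
\]
By induction, $L^{(m)} \coloneqq \sum_j \lambda_j x_j^{(m)} \le \rho/(1-\rho)$. Monotone convergence then gives finite $L$, and passing to the limit in the recursion recovers the fixed-point equation above.
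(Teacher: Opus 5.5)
Your proposal is correct and follows the paper's route. You take expectations in the busy-period recursion via Wald's equation, then use the rank-one mean matrix $\E{A_{k,j}} = \lambda_j \E{R_k}$ from \cref{lem:num-arrivals} to solve the resulting linear system; your multiply-by-$\lambda_k$-and-sum reduction to $L = \rho(1+L)$ is just the scalar form of the paper's Sherman--Morrison inversion of $I - \E{\vec R}\,\vec\lambda^{\top}$. Your truncation argument showing $L < \infty$ is a worthwhile addition, since the paper's proof tacitly assumes the mean busy-period vector is finite before inverting.
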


\subsection{Class-restricted Busy Periods}\label{sec:JJT-stability:restricted}

We now define and analyze \emph{class-restricted busy periods}, which capture the
amount of work generated by arrivals of classes strictly lower than~$\ell$.
These objects will be central in our response-time analysis, since the response
time of a class~$\ell$ job depends only on arrivals of classes~$<\ell$, which would have higher priority.

Intuitively, a class~$<\ell$ busy period is obtained from a full busy period by discarding all arrivals of classes~$\ell,\ell+1,\ldots,n$ and all work generated
by those arrivals. Formally, this corresponds to modifying the branching construction so that only arrivals of classes~$<\ell$ produce descendants. 

\begin{figure}
    \centering
    \includegraphics[page = 2, width=\linewidth, trim= 0 0 0 0,clip]{img/preemption-oh-pics.pdf}
    \caption{
    (Left)~A full busy period tree. Nodes are labeled with their class. Classes 1, 2, and 3 are purple, green, and orange respectively. Overhead is shaded darker. Pause and resumes are labeled with "P" and "R" respectively. Arrows are drawn from jobs to the jobs that arrived during their service.
    (Right)~Highlighting the class~$<3$ busy period tree with a blue background and black outline. Jobs of class~$3$ and their descendants are greyed out because they are not included in a class~$<3$ busy period tree.
    \label{fig:class-restricted}}
\end{figure}

\begin{definition}
The \emph{class~$<\ell$ busy period started by a service-arrival joint distribution
$\calV$}, denoted $B_{<\ell}(\calV)$, is the random variable satisfying
\[
B_{<\ell}(\calV) \stocheq
R_\calV + \sum_{i=1}^{\ell-1} \sum_{j=1}^{A_{\calV,i}} B_{<\ell}(\jobPair_i)^{(j)},
\]
where $\{B_{<\ell}(\jobPair_i)^{(j)}\}$ are independent copies of
$B_{<\ell}(\jobPair_i)$ and are independent of $(R_\calV,\vec A_\calV)$.
\end{definition}

This recursion corresponds to a busy period tree in which only nodes of class
$<\ell$ are allowed to generate descendants. Equivalently, it is obtained from the
full busy period tree by deleting all nodes of class~$\ge\ell$ and their
descendants, as seen in \cref{fig:class-restricted}.

As in the full busy period case, the distribution of a class~$<\ell$ busy period
is characterized by a fixed-point equation involving the job joint transforms.

\begin{corollary}\label{corr:class-busy-period-transform}
Consider $\theta \ge 0$ and $\ell \in \{1, \dots, n+1\}$, and let $\vec b$ be the vector defined by
\[
b_i = \begin{cases}
    \lst{B_{<\ell}(\jobPair_i)}(\theta) & \text{if } i < \ell \\
    1 & \text{if } i \ge \ell.
\end{cases}
\]
\* The class~$< \ell$ busy period initiated by service-arrival joint distribution~$\calV$ has transform
\[
\lst{B_{<\ell}(\calV)}(\theta) = \lst{\calV}\!\left(\theta, \vec b\right).
\]
\* The vector $\vec b$ is the least nonnegative solution to
\[
b_i = \begin{cases}
    \jjt_i\!\left(\theta, \vec b\right) & \text{if } i < \ell \\
    1 & \text{if } i \ge \ell.
\end{cases}
\]
\*/
\end{corollary}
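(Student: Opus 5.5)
The natural route is to obtain this corollary from \cref{thm:busy-period-transform}, applied to a modified system. Fix $\ell$. Define modified job joint distributions $\jobPair'_i$ for $i<\ell$ by keeping $R_i$ and the arrival counts $A_{i,j}$ for $j<\ell$ unchanged, but setting the class~$\ge\ell$ counts to zero. For $i\ge\ell$ the classes play no role as descendants, so we may take $\jobPair'_i$ arbitrary, for instance $(0,\vec 0)$. Modify $\calV$ in the same way to get $\calV'$. Because the transform of a zeroed coordinate does not depend on the corresponding $z_j$, we have
\[
\lst{\calV'}(\theta,\vec z)=\lst{\calV}(\theta,\vec z)\big|_{z_j=1,\ j\ge\ell},
\qquad
\lst{\jobPair'_i}(\theta,\vec z)=\jjt_i(\theta,\vec z)\big|_{z_j=1,\ j\ge\ell} \quad (i<\ell).
\]
Comparing the recursion defining $B_{<\ell}$ with \cref{def:class-busy-period} gives $B_{<\ell}(\calV)\stocheq B'(\calV')$ and $B_{<\ell}(\jobPair_i)\stocheq B'(\jobPair'_i)$ for $i<\ell$. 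Here $B'$ denotes the full busy period built from the primed distributions. Only class~$<\ell$ children ever appear, so the classes $\ge\ell$ never enter the recursion.

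\emph{First part.} I would not invoke the theorem blindly here, since \cref{thm:busy-period-transform} is stated for the actual job joint transforms. Instead I would redo its one-line argument directly. Condition on $(R_\calV,\vec A_\calV)$. Use independence of the child busy periods to factor the conditional expectation as
\[
e^{-\theta R_\calV}\prod_{i<\ell} b_i^{A_{\calV,i}}.
\]
Insert $b_i=1$ for $i\ge\ell$ to extend the product to all $n$ coordinates, then take expectations. The result is $\lst{\calV}(\theta,\vec b)$. Applying this with $\calV=\jobPair_i$ for $i<\ell$ shows that $\vec b$ satisfies the stated fixed-point system.

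\emph{Second part: minimality.} This is the main obstacle. I would prove the least-solution claim by the standard branching-process iteration, restricted to the coordinates $i<\ell$.
\begin{itemize}
\item Define the map $\Phi$ on $[0,1]^{\ell-1}$ by $\Phi_i(\vec x)=\jjt_i(\theta,(\vec x,\vec 1))$.
\item $\Phi$ is monotone nondecreasing, since the coefficients are nonnegative.
\item Let $b^{(m)}$ denote the transform of the busy period truncated at tree depth $m$, with a truncated subtree contributing the factor $1$ (equivalently, with generations beyond $m$ discarded).
\item Then $b^{(0)}$ is the transform of $R_i$ alone, and $b^{(m+1)}=\Phi(b^{(m)})$.
\item The truncated busy periods increase to $B_{<\ell}(\jobPair_i)$, possibly infinite, so by monotone convergence $b^{(m)}\downarrow b_i$, with $e^{-\theta\cdot\infty}=0$ when $\theta>0$.
\end{itemize}
This gives convergence but from above, so it does not by itself yield minimality. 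To get a least solution I would instead iterate from $\vec 0$. Set $x^{(0)}=\vec 0$ and $x^{(m+1)}=\Phi(x^{(m)})$. One checks that $x^{(m)}$ equals $\E{e^{-\theta B}\mathbf 1\{\text{tree depth}<m\}}$. By monotonicity and induction, $x^{(m)}\le y$ for any nonnegative fixed point $y$. Also $x^{(m)}\uparrow \E{e^{-\theta B}\mathbf 1\{\text{finite tree}\}}$, which equals $b_i$ under the convention that an infinite busy period has transform contribution $0$. Hence $\vec b$ is the least nonnegative solution. The delicate point is the $\theta=0$ case, where this is exactly the extinction-probability characterization; I would handle it the same way.
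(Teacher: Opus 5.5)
Your proposal is correct and is essentially the paper's own argument. The corollary is the busy-period-transform theorem with the class~$\ge\ell$ arguments frozen at $1$: part (a) comes from conditioning on $(R_\calV,\vec A_\calV)$, and minimality comes from the monotone iteration $\vec x^{(m+1)}=\Phi(\vec x^{(m)})$ started at $\vec 0$, which is exactly the paper's sequence of depth-truncated transforms (the paper starts at $\Phi(\vec 0)$, a harmless index shift) and is bounded above by any nonnegative fixed point.

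The one step you leave implicit, as the paper also does, is that an infinite tree forces an infinite busy period almost surely; this holds here because Poisson arrivals are a.s.\ finite on bounded intervals, and it is what identifies the limit of the iteration with $b_i$ when $\theta>0$.
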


Here the transform values corresponding to classes~$\geq\ell$ are set to~$1$ because we are essentially "zeroing out" the contributions to the busy period from arrivals of classes~$\geq \ell$ (and $e^0 = 1$).

The expected lengths of class~$<\ell$ busy periods follow from \cref{corr:busy-period-length}.
Recall that the quantity $\rho_{<\ell}$ is the effective load contributed by classes strictly less than~$\ell$.

\restatably\begin{corollary}\label{thm:class-busy-period-length}
For a service-arrival joint distribution $\calV$, the class~$<\ell$ busy period
initiated by $\calV$ has mean
\[
\E{B_{<\ell}(\calV)}
=
\E{R_\calV}
+
\sum_{j=1}^{\ell-1}
\E{A_{\calV,j}} \E{B_{<\ell}(\jobPair_j)}.
\]
Moreover, for a class~$k$ job,
\[
\E{B_{<\ell}(\jobPair_k)}
=
\frac{\E{R_k}}{1-\rho_{<\ell}},
\]
with $\rho_{<\ell}$ defined as in \cref{model:loads}.
\end{corollary}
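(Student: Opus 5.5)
The plan is to mirror the proof of \cref{corr:busy-period-length}, with the full recursion replaced by the class-restricted one.

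\textbf{First identity.} Take expectations in the defining recursion for $B_{<\ell}(\calV)$. The copies $B_{<\ell}(\jobPair_i)^{(j)}$ are i.i.d. and independent of $(R_\calV,\vec A_\calV)$, so conditioning on $\vec A_\calV$ (or applying Wald's equation, \cref{thm:wald}) gives, for each $i<\ell$,
\[
\E*{\sum_{j=1}^{A_{\calV,i}} B_{<\ell}(\jobPair_i)^{(j)}}=\E{A_{\calV,i}}\,\E{B_{<\ell}(\jobPair_i)}.
\]
Summing over $i<\ell$ yields the first display. Two points need care.
\begin{itemize}
\item Everything is nonnegative, so the identity holds in $[0,\infty]$ by monotone convergence even before finiteness is known.
\item The identity is meaningful once we know $\E{B_{<\ell}(\jobPair_i)}<\infty$ for $i<\ell$.
\end{itemize}

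\textbf{Finiteness.} The restricted branching process, in which only classes $<\ell$ reproduce and only class $<\ell$ offspring are counted, has mean matrix $M_{<\ell}=(\E{A_{i,j}})_{i,j<\ell}=(\lambda_j\E{R_i})_{i,j<\ell}$ by \cref{lem:num-arrivals}. This matrix is rank one with nonzero eigenvalue $\sum_{j<\ell}\lambda_j\E{R_j}=\rho_{<\ell}$. If $\rho_{<\ell}<1$, the expected total progeny is finite. We then apply Wald's equation as in \cref{thm:stability}, assuming $\E{R_k}<\infty$, to get finite expected restricted busy periods. This is the step that requires the most justification, but it is the same argument as for \cref{corr:busy-period-length} applied to a subsystem. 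The hypothesis $\rho_{<\ell}<1$ is implicit in the statement and is implied by $\rho<1$. Alternatively, one can differentiate the fixed-point equations of \cref{corr:class-busy-period-transform} at $\theta=0$, taking care to pick the minimal solution branch.

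\textbf{Closed form.} Specialize the first identity to $\calV=\jobPair_k$ and use \cref{lem:num-arrivals}:
\[
x_k=\E{R_k}+\sum_{j<\ell}\lambda_j\E{R_k}\,x_j,\qquad x_k=\E{B_{<\ell}(\jobPair_k)}.
\]
Hence $x_k=\E{R_k}\,(1+c)$ with $c=\sum_{j<\ell}\lambda_j x_j$, a scalar independent of $k$. Multiplying by $\lambda_k$ and summing over $k<\ell$ gives
\[
c=\rho_{<\ell}\,(1+c),\qquad\text{so}\qquad c=\frac{\rho_{<\ell}}{1-\rho_{<\ell}}.
\]
Dividing by $1-\rho_{<\ell}$ is legitimate because $c$ is finite and $\rho_{<\ell}<1$. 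Therefore $x_k=\E{R_k}/(1-\rho_{<\ell})$. This holds for every class $k$, including $k\ge\ell$, since the same formula $x_k=\E{R_k}(1+c)$ applies with the same $c$.
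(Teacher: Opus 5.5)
Your proposal is correct and follows the paper's proof: take expectations in the restricted recursion via Wald's equation, specialize to $\jobPair_k$, and exploit the rank-one mean matrix from \cref{lem:num-arrivals}, where your scalar reduction through $c=\sum_{j<\ell}\lambda_j x_j$ is the paper's Sherman--Morrison inversion done by hand. You also spell out finiteness and the case $k\ge\ell$, which the paper's proof leaves implicit (it states the result only for $k<\ell$, although \cref{lem:x-distribution} uses $\E{B_{<k}(\jobPair_k)}$).
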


Class~$<\ell$ busy periods will play a key role in the response-time analysis in
\cref{sec:response-time}, where they help describe the delay
experienced by a class~$\ell$ job due to higher-priority arrivals.

\subsection{When Do the Results of This Section Hold?}\label{sec:JJT-stability:assumptions}\label{sec:local-arrivals}

Our job joint transform framework holds for overhead models and queueing systems beyond just the one we study in this work. In order for our analysis from this section to hold for a system, we need two "ingredients". First, the general branching-process approach requires that we can model the arrival process as being "local" to each job. This locality means that the set of jobs served during a busy period can be modeled as a multitype Galton-Watson process, where each node's offspring distribution depends only on its class and remains independent of the "global" timeline. Second, to derive the formulas and expressions for stability and busy period lengths \see\cref{thm:stability, corr:busy-period-length}, the system must satisfy the identity $\E{A_{k,i}} = \lambda_i\E{R_k}$ for all classes $i$ and~$k$.  This property makes the mean offspring matrix $M$ a rank-one matrix, which is the primary reason the resulting expressions for stability and mean busy period length are intuitive and computationally simple.

These two ingredients are naturally provided by time-homogeneous Poisson arrivals with "arrival-sensitive" service. Other systems like this include both preempt-repeat and preeempt-restart-different, along with many other preemption overhead models. Our framework extends even beyond these systems as well; for instance, discrete-time models with slotted arrivals and service (that is, i.i.d. arrivals each time step) and i.i.d. batch-Poisson processes also "contain" both ingredients.

Even in systems that do not possess the rank-one property, such as those with arrival rates that depend on the class of job in service, the first ingredient, "local arrivals", still holds. This allows the multitype branching structure to remain a valid tool for stability analysis, meaning the busy period transform results still hold, even if the final formulas for stability and expected lengths may become more complex.

We analyze the job joint transform for a variety of the other systems mentioned here in \cref{sec:extensions}.

\section{Deriving the Joint Job Transform for Our Overhead Model}\label{sec:jjt-derivation}
Now that we have developed general stability conditions for systems with service-arrival dependency in terms of the job joint transform, we can use the specific overhead dynamics of our model \see\cref{model} to derive the exact form of the job joint transform for our system. By deriving the job joint transform for our system and the load of the system, we will have the necessary pieces to complete the busy period and stability analysis as in \cref{thm:stability} for our model.

\subsection{Overhead Chains}
Recall from \cref{model:overhead} that the arrival of class~$<k$ jobs during the service of a class~$k$ job triggers overhead and thus increases the effective service time of the class~$k$ job.
But by how much does each preempting arrival extend our job's service time?
Each time a class~$k$ job is interrupted, a "chain" of overheads occurs before the job continues service towards its original size, consisting of some number of pause-resume pairs.
The chain continues whenever a resume fails, and it ends when a resume succeeds \see\cref{alg:pprio_with_overhead}. See \cref{fig:preemption-chain} for an illustration.

\begin{definition}\label{def:preemption-chain}
    A \emph{class~$k$ overhead chain} is the sequence of pause and resume overheads between a class~$k$ job being preempted and continuing service towards its original size.
    \*[(a)]A \emph{link of an overhead chain} is the pair made up of a pause overhead and the resume overhead directly succeeding it.
    \* an overhead chain forms another link if the resume of the current link fails (i.e., an arrival of class lower than the preempted job arrives during the resume overhead of the current link) \see\cref{alg:pprio_with_overhead}.
    \* The \emph{final link of an overhead chain} occurs when the resume of the current link succeeds (i.e., no preempting arrivals occur during the resume overhead of the current link).
    \*/
\end{definition}

Thus the effective service time of a class~$k$ job is made up of the original size~$S_k$ plus, for each class~$< k$ arrival during $S_k$, the length of one \emph{overhead chain}.

\cref{lem:JJT-complete} gives the job joint transform for class~$k$ jobs in terms of $\lst\calO_k(\theta, \vec{z})$, the service-arrival joint transform of a single overhead chain. Thus, deriving $\lst\calO_k$ is all that remains to derive the job joint transform for a class~$k$ job.


\begin{figure}
    \centering
    {\includegraphics[page = 3, width=\linewidth, trim= 100 25 100 0,clip]{img/preemption-oh-pics.pdf}}
    \caption{
    A class~$3$ job and the arrivals during its service. The shaded orange region marked A is a class~$3$ overhead chain made up of two links, and the other shaded orange region marked B is another class~$3$ overhead chain made up of only one link.}
    \label{fig:preemption-chain}
\end{figure}

In order to derive the service-arrival joint transform of an overhead chain, let us begin by reasoning through how the links of an overhead chain are formed. The first link occurs when the job was initially preempted. The next link forms if a class~$< k$ job arrives during the resume portion of the current link. The overhead chain has its final link when a resume completes without any arrivals from class~$< k$ jobs occurring. This leads us to the following observation.

\begin{observation}\label{remark:links}
    By a standard property of exponential random variables and Laplace-Stieltjes transforms, it follows that the links in an overhead chain form geometrically with probability $\lst{D_k}(\lambda_{< k})$ of being the final link in a chain. This means the number of future links or arrivals in an overhead chain is independent of the number of links or arrivals in the chain that have already occurred.
\end{observation}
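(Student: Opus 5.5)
The observation has two parts. We first show that each link ends the chain with probability $\lst{D_k}(\lambda_{<k})$. We then show that successive links are i.i.d., which gives the geometric count and the independence of future links from past ones.

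\emph{Probability of a final link.} Fix a link of a class~$k$ overhead chain. Its resume has length $D$, distributed as $D_k$ and drawn independently of everything before it. The link is final exactly when no class~$<k$ job arrives during the resume. By \cref{alg:pprio_with_overhead}, the resume is nonpreemptible and has a fixed start time. Class~$<k$ arrivals form a superposition of independent Poisson processes, which is itself Poisson with rate $\lambda_{<k}$. The resume's start time is a stopping time for the arrival filtration, and $D$ is independent of that filtration. By the strong Markov property, the number of class~$<k$ arrivals in the resume window is therefore, conditionally on $D$, Poisson$(\lambda_{<k} D)$. Hence
\[
\P{\text{link is final}} = \E*{e^{-\lambda_{<k} D_k}} = \lst{D_k}(\lambda_{<k}),
\]
which is the "standard property" relating exponential clocks to Laplace--Stieltjes transforms.

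\emph{Independence across links.} For this we use a regeneration argument. Each new link begins at a pause whose start time is a stopping time: either the preemption moment or the end of a failed resume. At that stopping time three things are fresh. The Poisson arrival processes restart independently of the past by the strong Markov property. The next pause length is an independent draw from $C_k$. The next resume length is an independent draw from $D_k$. The events inside a link (its arrivals during the pause and resume, and whether it fails) are determined by these fresh ingredients. It follows by induction that the links are i.i.d. The number $L$ of links is then the index of the first "success" in i.i.d. Bernoulli trials with success probability $p=\lst{D_k}(\lambda_{<k})$, so $L$ is geometric with $\P{L=m}=(1-p)^{m-1}p$.

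\emph{Memorylessness.} The final claim follows from the same regeneration. Condition on the chain having already produced some links, all failed, together with their arrival counts. The remainder of the chain starts at a regeneration time, so it is distributed as a fresh chain and is independent of the history. In particular, the number of future links and the arrivals during them do not depend on what has occurred so far.

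\emph{Main obstacle.} The one subtle point is justifying that arrivals during a resume are Poisson given its length. Some arrivals during the pause may also be class~$<k$, and one might worry that they influence the resume. They do not: by \cref{alg:pprio_with_overhead}, only arrivals during the resume decide whether it fails. Other jobs served between the pause and the resume are attributed to those jobs' own effective sizes in the busy-period tree. Within the chain's own timeline, the pause and resume durations are therefore independent of the arrival process, and the strong Markov argument applies.
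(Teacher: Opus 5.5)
Your proposal is correct and follows the same reasoning the paper relies on. A link is final exactly when no class~$<k$ Poisson arrival falls in its resume, which happens with probability $\lst{D_k}(\lambda_{<k})$, and each new link starts afresh by the memorylessness of the arrival process (your strong Markov property at the pause and resume start times). The paper asserts this in one line without proof, so your regeneration argument is a more careful write-up of the same idea.
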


Characterizing the exact distribution of the length of each overhead chain requires understanding the joint distribution between the length of the overhead chain and the number of preempting arrivals during that overhead chain. Thus we must derive the service-arrival joint transform $\lst\calO_k(\theta, \vec{z})$ of the duration of a class~$k$ overhead chain.

\begin{lemma}\label{lem:JJT-chain}
    The service-arrival joint transformation for a single class~$k$ overhead chain is:
    \[
    \lst\calO_k(\theta, \vec{z})
        = \frac{\lst{C_k}\gp*{
            \theta + \lambda(1 - z)} \, \lst{D_k}\gp*{
            \theta + \lambda_{\geq k}(1 - z_{\geq k}) + \lambda_{< k}}}
            {1 - \lst{C_k}\gp*{
            \theta + \lambda(1 - z)}  \gp[\big]{
                \lst{D_k}\gp*{
            \theta + \lambda(1 - z)} - \lst{D_k}\gp*{
            \theta + \lambda_{\geq k}(1 - z_{\geq k}) + \lambda_{< k}}
            }}.
    \]
\end{lemma}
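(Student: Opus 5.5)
We decompose an overhead chain into its links and exploit the renewal structure of \cref{remark:links}. Write a single link as a pause of length $C$ (distributed as $C_k$) followed by a resume of length $D$ (distributed as $D_k$). During each overhead, arrivals of every class $i$ occur as independent Poisson processes of rate $\lambda_i$, independent of the overhead lengths. By \cref{lem:poisson-standard-joint-tfm}, a pause then contributes the factor
\[
\E*{e^{-\theta C}\prod_i z_i^{A_i}} = \lst{C_k}\gp*{\theta + \textstyle\sum_i \lambda_i(1 - z_i)} = \lst{C_k}\gp*{\theta + \lambda(1 - z)},
\]
using the mixture convention $\lambda z = \sum_i \lambda_i z_i$. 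The pause's arrivals never affect whether the chain continues, since only arrivals during the resume determine failure.

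For the resume we split on the event that the resume \emph{succeeds}, i.e.\ that no class~$<k$ arrival occurs during $D$. On the success event we take the expectation of $e^{-\theta D}\prod_i z_i^{A_i}\,\mathbb{1}\{A_{<k}=0\}$. Conditioning on $D$, class~$<k$ counts contribute $e^{-\lambda_{<k}D}$ and class~$\geq k$ counts contribute $e^{-\lambda_{\geq k}(1-z_{\geq k})D}$. This gives
\[
s \coloneqq \lst{D_k}\gp*{\theta + \lambda_{\geq k}(1 - z_{\geq k}) + \lambda_{<k}}.
\]
The full unrestricted resume contributes $\lst{D_k}(\theta + \lambda(1-z))$. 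Subtracting, the failure event contributes
\[
f \coloneqq \lst{D_k}(\theta + \lambda(1-z)) - s.
\]

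Next we set up a first-step recursion. Let $c = \lst{C_k}(\theta + \lambda(1-z))$. The chain consists of a pause, then a resume, and if the resume fails an independent fresh copy of the chain follows. This is the memorylessness in \cref{remark:links}: whether a new link forms depends only on arrivals in the current resume, and subsequent overheads and arrivals are independent of the past. Hence
\[
\lst\calO_k = c\,(s + f\,\lst\calO_k).
\]
Solving gives $\lst\calO_k = cs/(1 - cf)$, which is the claimed formula.

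To be rigorous without assuming the fixed point is the right solution, we can instead sum over the number $L$ of links. The event $\{L = m\}$ corresponds to $m-1$ failures followed by one success, and all links are independent. This yields $\sum_{m\ge1} c^m f^{m-1} s$, a geometric series that converges since $|cf| \le \P{\text{fail}} < 1$ for $\theta \ge 0$ and $|z_i| \le 1$. The main point needing care is the splitting of the resume transform into success and failure parts. There we must check that conditioning on failure does not bias the class~$\geq k$ counts or the length beyond what the indicator captures. This follows from conditional independence of the Poisson streams given $D$.
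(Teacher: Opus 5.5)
Your proposal is correct and follows essentially the same route as the paper. The paper's collective-marks proof conditions on whether the first resume contains a class~$<k$ arrival and arrives at exactly your first-step equation $\lst\calO_k = c\,(s + f\,\lst\calO_k)$: its probability that the first link is unmarked and the resume succeeds is your $cs$, and its probability that the first link is unmarked and the resume fails is your $cf$. Your geometric-series sum over the number of links is a small extra justification the paper omits, though the equation is linear with coefficient $cf<1$, so its solution is unique in any case.
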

\begin{proof}
    We complete this proof through the lens of the method of collective marks \see\cref{sec:MOCM}. The joint transform $\lst\calO_k(\theta, \vec{z})$ can be interpreted as the probability that the entire overhead chain is unmarked by both Poisson arrivals at rate $\theta$ during its duration, as well as by job arrivals of class~$i$ marked with probability $z_i$.
    Define the events
    \[
    CU &= \text{chain unmarked},\\
    LU &= \text{first link unmarked},\\
    PA &= \text{preempting arrival during first resume}
    \]
    Then, using \cref{remark:links}, it follows that
    \[
    \lst\calO_k(\theta, \vec{z})
    &
    = \P{CU}
    = \P{CU \cap PA} + \P{CU \cap PA^C}
    = \P{CU \cap PA} + \P{LU \cap PA^C}
    \\ &
    = \P{LU \cap PA}\P{CU} + \P{LU \cap PA^C}
    \\ &
    = (\P{LU} - \P{LU \cap PA^C})\P{CU} + \P{LU \cap PA^C}
    \\
    \Rightarrow \P{CU} &= \frac{\P{LU \cap PA^C}}{1 - (\P{LU} - \P{LU \cap PA^C})}.
    \]
    It remains to calculate the probabilities $\P{LU}$ and $\P{LU \cap PA^C}$. $\P{LU}$ is the probability that the first link of an overhead chain is unmarked. This is equal to the probability that the pause overhead is unmarked \emph{and} the resume overhead is unmarked (which are independent), which simply means:
    \[
    \P{LU} &= \lst{C_k}(\theta + \lambda(1 - z))\lst{D_k}(\theta + \lambda(1 - z)).
    \]
    $\P{LU \cap PA^C}$ is the probability that the first link of the overhead chain is unmarked \emph{and} there are no further links in the overhead chain. This is equivalent to the probability that the pause overhead is unmarked, the resume overhead is unmarked, \emph{and} the resume overhead has no class~$< k$ arrivals (occurring at rate $\lambda_{< k}$) occur \emph{at all}. This means that
    \[
    \P{LU \cap PA^C} &= \lst{C_k}(\theta + \lambda(1 - z))\lst{D_k}(\theta + \lambda_{\geq k}(1 - z_{\geq k}) + \lambda_{< k}).
    \]
    Plugging these to the expression we found for $\lst\calO_k(\theta, \vec{z})$ yields the desired expression.
\end{proof}

\subsection{From the Overhead Chain Joint Transform to the Job Joint Transform}

Now that we know the expression for the duration of a single overhead chain, and we know that a job's effective service time is made up of its original size plus one overhead chain for each preemption during its original size, we obtain the expression for the job joint transform of a class~$k$ job in a preemptive priority system with our model \see\cref{model:overhead} of preemption overhead.

\begin{theorem}\label{lem:JJT-complete}
    The job joint transform for a class~$k$ is
    \[
        \jjt_k(\theta, \vec{z})
        = \lst{S_k}\gp*{
            \theta + \lambda_{\geq k}(1 - z_{\geq k}) + \lambda_{< k}(1 - z_{< k}\lst\calO_k(\theta, \vec{z}))}
        ,
    \]
    where $\lst\calO_k(\theta, \vec{z})$ is as in \cref{lem:JJT-chain}.
\end{theorem}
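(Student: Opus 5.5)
We would prove \cref{lem:JJT-complete} by conditioning on the original size $S_k = s$ and decomposing the effective service of a class~$k$ job into its original work plus one overhead chain per preempting arrival. The key structural fact, from \cref{alg:pprio_with_overhead}, is that while the class~$k$ job is actually receiving original service, arrivals of classes~$\geq k$ do not interrupt it, while each class~$<k$ arrival immediately triggers a class~$k$ overhead chain. The job then resumes original work only after that chain ends, with a successful resume. So, conditional on $S_k = s$, the arrivals during the original work form independent Poisson processes: classes~$\geq k$ at total rate $\lambda_{\geq k}$ (each class~$i$ at rate $\lambda_i$), and preempting class~$<k$ arrivals at rate $\lambda_{<k}$. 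Each preempting arrival spawns an overhead chain whose service-arrival joint distribution is an i.i.d. copy of $\calO_k$, independent of everything during original work and of the other chains. This independence follows from the memoryless property of Poisson arrivals, as in \cref{remark:links}, and from the fact that overheads are nonpreemptible.

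Next we compute the transform using the method of collective marks, as in the proof of \cref{lem:JJT-chain}. The quantity $\jjt_k(\theta,\vec z)$ is the probability that the whole effective service is unmarked. Here "unmarked" means no rate-$\theta$ catastrophe occurs, and every class~$i$ arrival is independently marked with probability $1-z_i$ and none is marked. During the original work of length $s$, the relevant events are of three kinds:
\begin{itemize}
\item catastrophes, at rate $\theta$;
\item marked arrivals of classes~$\geq k$, at rate $\sum_{i\geq k}\lambda_i(1-z_i) = \lambda_{\geq k}(1-z_{\geq k})$;
\item "bad" class~$<k$ arrivals, where a class~$i$ arrival is bad if either it is itself marked or the overhead chain it triggers is marked.
\end{itemize}
A class~$i<k$ arrival is good with probability $z_i \lst\calO_k(\theta,\vec z)$, so bad class~$<k$ arrivals form a thinned Poisson process of rate $\sum_{i<k}\lambda_i(1 - z_i\lst\calO_k) = \lambda_{<k}(1 - z_{<k}\lst\calO_k(\theta,\vec z))$. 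Poisson thinning, which is valid because the chains are i.i.d. and independent of the arrival epochs, gives the conditional probability of no bad event during the original work:
\[
\exp\bigl(-s\bigl(\theta + \lambda_{\geq k}(1-z_{\geq k}) + \lambda_{<k}(1 - z_{<k}\lst\calO_k(\theta,\vec z))\bigr)\bigr).
\]
Taking the expectation over $S_k$ then yields the stated formula, with $\lst\calO_k$ supplied by \cref{lem:JJT-chain}.

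The main obstacle is justifying the decomposition rigorously. We need to confirm three things. First, the arrivals during the job's effective service, which is interleaved in real time with higher-priority jobs' service, are exactly the arrivals during its original work plus those during its own overhead chains; this holds because, by the definition of effective size, time spent serving other jobs is excluded. Second, the overhead chains are i.i.d. copies of $\calO_k$, independent of the original-work arrival process. Third, the arrival counts recorded inside $\calO_k$ already include the class~$<k$ arrivals during the chain that cause failed resumes, so no arrivals are double counted. To settle these, we would cut effective service into alternating segments of original work and overhead chains, and apply the strong Markov property of the Poisson processes at each segment boundary, invoking the locality discussion of \cref{sec:local-arrivals}.
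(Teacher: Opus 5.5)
Your proposal is correct and matches the paper's proof. The paper uses the same method of collective marks: to $\theta$ it adds the rate $\lambda_{\geq k}(1-z_{\geq k})$ of marked non-preempting arrivals and the rate $\lambda_{<k}\bigl(1 - z_{<k}\lst\calO_k(\theta,\vec z)\bigr)$ of preempting arrivals that are either marked or trigger a marked overhead chain. Your conditioning on $S_k = s$ and your justification that the chains are i.i.d.\ copies of $\calO_k$ with no double counting make explicit the independence that the paper's one-paragraph argument takes for granted.
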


\begin{proof}
    Consider a class~$k$ job. Via the method of collective marks, the mark rate during the effective service time is equal to the base mark rate $\theta$ plus the rate of marked nonpreemptive arrivals $\gp*{\sum_{i=k}^n\lambda_i(1 - z_i)} = \lambda_{\geq k}(1 - z_{\geq k})$, plus the rate of preemptive arrivals that are either marked \emph{or} cause a marked overhead chain $\sum_{i=1}^{k - 1} \lambda_{i}(1 - z_i\lst\calO_k(\theta, \vec{z})) = \lambda_{< k}(1 - z_{< k}\lst\calO_k(\theta, \vec{z}))$. Recall that $\calO_k(\theta, \vec{z})$ denotes the probability that a class~$k$ overhead chain is unmarked, which we found in \cref{lem:JJT-chain}.
\end{proof}

The final step of defining our system as an service-arrival-dependent system is by confirming we can compute stability conditions from the job joint transform. In order to do so, we must be able to actually derive values for the class~$k$ pause and resume loads $\gamma_k \coloneqq \lambda_k\E{C^*_k}$ and $\delta_k \coloneqq \lambda_k\E{D^*_k}$, which we can do via computing the expectation of $C^*_k$ and $D^*_k$.

\begin{lemma}\label{lem:overhead-load}
The expected time spent on pause and resume overhead per class~$k$ job are
    \[
    \E{C^*_k} &= \frac{\lambda_{<k}\E{S_k} \E{C_k}}{\lst{D_k}(\lambda_{<k})}, &
    \E{D^*_k} &= \frac{\lambda_{<k}\E{S_k} \E{D_k}}{\lst{D_k}(\lambda_{<k})}.
    \]

\end{lemma}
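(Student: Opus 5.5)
We will compute the two expectations by a Wald-type decomposition of the overhead time into overhead chains, and of each chain into links. A class~$k$ job is preempted exactly once for each class~$<k$ arrival during its original service $S_k$, and each preemption starts one overhead chain (\cref{def:preemption-chain}). Arrivals during the overhead chains do not start new chains: class~$<k$ arrivals during a pause or resume only lengthen the current chain. The number $N$ of chains is therefore the number of class~$<k$ arrivals in an interval of length $S_k$. Conditional on $S_k$, this is Poisson with mean $\lambda_{<k} S_k$, so
\[
\E{N} = \lambda_{<k}\E{S_k}.
\]
The chains are i.i.d. and independent of $N$, since their lengths depend only on pause/resume draws and on arrivals during the overheads. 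Wald's equation (\cref{thm:wald}) then gives $\E{C^*_k} = \E{N}\,\E{C^{\mathrm{chain}}}$, where $C^{\mathrm{chain}}$ is the total pause time in one chain, and analogously $\E{D^*_k} = \E{N}\,\E{D^{\mathrm{chain}}}$.

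Within one chain, each link consists of one pause drawn from $C_k$ and one resume drawn from $D_k$. By \cref{remark:links}, the number of links $L$ is geometric with success probability $\lst{D_k}(\lambda_{<k})$, which is the probability that no class~$<k$ arrival occurs during a resume. Hence
\[
\E{L} = \frac{1}{\lst{D_k}(\lambda_{<k})}.
\]
The pause lengths are i.i.d. and independent of $L$, because whether a link is final depends only on its resume. Wald gives $\E{C^{\mathrm{chain}}} = \E{L}\,\E{C_k}$, and multiplying by $\E{N}$ yields the stated formula for $\E{C^*_k}$.

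The main subtlety is the resumes. $L$ is a stopping time for the sequence of resume lengths paired with their arrival indicators, but it is \emph{not} independent of those lengths, since long resumes are more likely to fail. Wald's equation (\cref{thm:wald}) applies to stopping times, so I would take $Y_j$ to be the length of the $j$-th resume. The event $\{L \ge j\}$ depends only on resumes $1,\dots,j-1$ and their arrival counts, so it is independent of $Y_j$. This gives $\E{D^{\mathrm{chain}}} = \E{L}\,\E{D_k}$ even though successful resumes are biased short. The same reasoning applies to the outer sum over chains: $N$ is determined by arrivals during $S_k$, independent of the chain contents.

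As a cross-check, I would differentiate the transform in \cref{lem:JJT-chain} at $\theta=0$, $\vec z=\vec 1$. This should give
\[
\E{C^{\mathrm{chain}}} + \E{D^{\mathrm{chain}}} = \frac{\E{C_k}+\E{D_k}}{\lst{D_k}(\lambda_{<k})},
\]
which is consistent with the split above.
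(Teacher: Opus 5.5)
Your proposal is correct and follows the paper's proof: Wald's equation applied over $\lambda_{<k}\E{S_k}$ expected chains per job, $1/\lst{D_k}(\lambda_{<k})$ expected links per chain, and the mean length of a single pause or resume. Your stopping-time justification for the resume sum, where the number of links is correlated with the resume lengths, makes explicit a step the paper covers only by saying the $D^*_k$ case is the same with $C_k$ replaced by $D_k$; your derivative check against \cref{lem:JJT-chain} is also correct.
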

\begin{proof}
    Let us begin by computing $\E{C^*_k}$, the expected amount of time spent on pause overhead per class~$k$ job. This is just the expected number of overhead chains per class~$k$ job, times the expected number of pauses (or links) per class~$k$ overhead chain, times the expected length of one pause overhead, by Wald's equation. The average number of overhead chains per class~$k$ job is simply the average number of preempting arrivals during a class~$k$ job's original service length, which is just $\lambda_{<k}\E{S_k}.$
    
    The links in an overhead chain are formed geometrically (\cref{remark:links}), as the next pause-resume pair occurs only if a class~$k$ arrival occurs during the resume overhead of the current pause-resume pair. Thus the average number of links in an overhead chain is the reciprocal of the probability that an overhead chain terminates. The probability that an overhead chain terminates is the probability that no preempting arrivals occur during a resume overhead, which is $\lst{D_k}(\lambda_{<k})$.  Multiplying the values we acquired gives us the desired formula. The proof for $\E{D^*_k}$ is the same, except we replace $C_k$ with $D_k$.
\end{proof}

\section{Computing Response Time}\label{sec:response-time}

Now that we have completed busy period and stability analysis for our system, it remains to use these values to actually find the response time distribution of class~$k$ jobs under preemptive priority with overhead. Our main result is stated below.

\begin{theorem}\label{thm:response-time}
    The Laplace-Stieltjes transform of response time for a class~$k$ job under a preemptive priority scheduling policy with preemption overhead model as in \cref{model:overhead} is
    \[
    \lst{T_k}(\theta) = \frac{(1-\rho_{<k}-\rho_k)\lst{B_{<k}(\calJ_k)}(\theta)}{1-\rho_{<k}-\rho_k \gp[\big]{\lst{B_{<k}(\calJ_k)}}_\e(\theta)}\widetilde{X_k^*}(\theta),
    \]
    where
$X_k^*$ is as in \cref{lem:x-distribution}.
\end{theorem}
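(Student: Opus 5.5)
The plan is a tagged-job decomposition in the style of classical preemptive-priority analyses, with the classical busy-period identities replaced by their job-joint-transform analogues from \cref{sec:JJT-stability}.

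Fix a tagged class~$k$ arrival. Under preemptive priority with FCFS within a class, only two things affect it: class~$\le k$ work present at its arrival, and class~$<k$ arrivals afterward. Class~$>k$ jobs matter only through a residual: if one is in service or in overhead when the tag arrives, the tag triggers a nonpreemptible pause, or must wait out an ongoing pause or resume. I will split
\[
T_k \stocheq W_k + B_{<k}(\jobPair_k),
\]
where $W_k$ is the time from the tag's arrival until its first entry into service. From that moment on, the tag's effective service, including all its overhead chains (\cref{lem:JJT-chain}, \cref{lem:JJT-complete}), interleaved with all class~$<k$ work arriving meanwhile, is exactly a class~$<k$ busy period started by $\jobPair_k$. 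Its transform is $\lst{B_{<k}(\jobPair_k)}(\theta)$ by \cref{corr:class-busy-period-transform}, and it is independent of $W_k$ because arrivals after the tag enters service are independent of the past. This gives the factor $\lst{B_{<k}(\jobPair_k)}(\theta)$ in the numerator.

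To analyze $W_k$, I group each class~$k$ job together with the class~$<k$ busy period it spawns into one "super-job" of length $B_{<k}(\jobPair_k)$. By \cref{thm:class-busy-period-length} the super-jobs have mean $\E{R_k}/(1-\rho_{<k})$, so they arrive at rate $\lambda_k$ with load $\rho_k/(1-\rho_{<k})$. Seen from class~$k$, the system is then an FCFS queue of super-jobs, interrupted by class~$<k$ busy periods started by class~$<k$ arrivals while no class~$k$ work is present, and by residual class~$>k$ overhead. I will then use PASTA and a renewal-reward/level-crossing argument over the cycles of the class~$\le k$ system. The tag arrives either when no super-job is active, or while one is active; the latter happens with probability $\rho_k/(1-\rho_{<k})$ after normalizing by the fraction of time not spent in pure class~$<k$ busy periods. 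In the active case the tag waits a stationary-excess remainder $\gp{B_{<k}(\jobPair_k)}_\e$ plus the full super-jobs queued ahead.

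Summing the resulting geometric series of super-jobs ahead should give the Pollaczek--Khinchine-type factor
\[
\frac{1-\rho_{<k}-\rho_k}{1-\rho_{<k}-\rho_k\gp[\big]{\lst{B_{<k}(\jobPair_k)}}_\e(\theta)},
\]
with the numerator $1-\rho_{<k}-\rho_k$ coming from the stability condition (\cref{thm:stability}) restricted to classes~$\le k$. Everything left over is the delay until the first relevant super-job begins, namely class~$<k$ work in the system, the residual class~$>k$ pause or resume, and the pause the tag itself triggers, each expanded by class~$<k$ busy periods. I will identify this leftover with $X_k^*$ from \cref{lem:x-distribution} and show it is independent of the super-job queue component, which gives the product form.

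The main obstacle is this last step: verifying that the stationary state seen by the tag factorizes cleanly into the super-job FCFS queue and the leftover $X_k^*$. The difficulty is that overhead makes a super-job's length depend on arrivals during it, so the classical work-decomposition argument must be redone. I would first try a distributional Little's law or level-crossing argument on class-$k$ super-jobs. The needed conditional independence would come from the Galton--Watson locality of \cref{sec:local-arrivals}: given the class of a job, its joint distribution $\jobPair_k$ does not depend on the global timeline.
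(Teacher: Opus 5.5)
Your overall structure matches the paper's: class-$k$ superjobs of size $S^* = B_{<k}(\jobPair_k)$ arrive at rate $\lambda_k$ and are served FCFS, the tag's own superjob contributes an independent factor $\widetilde{S^*}(\theta)$, and there is a leftover $X_k^*$. The gap is the product form, which is the real content beyond this bookkeeping. You defer exactly that step, and the mechanism you sketch for it does not produce it. Suppose you attach the Pollaczek--Khinchine factor to tags that find a superjob in progress (probability $\rho^* = \lambda_k\E{S^*} = \rho_k/(1-\rho_{<k})$) and $X_k^*$ to tags that do not. That gives a mixture,
\[
(1-\rho^*)\,\widetilde{X_k^*}(\theta) + \rho^*\,\widetilde{(S^*)_\e}(\theta)\,\widetilde{W^*}(\theta),
\qquad
\widetilde{W^*}(\theta) = \frac{1-\rho^*}{1-\rho^*\,\widetilde{(S^*)_\e}(\theta)}.
\]
The theorem instead asserts $\widetilde{W^*}(\theta)\widetilde{X_k^*}(\theta) = (1-\rho^*)\widetilde{X_k^*}(\theta) + \rho^*\widetilde{(S^*)_\e}(\theta)\widetilde{W^*}(\theta)\widetilde{X_k^*}(\theta)$, so the setup delay must multiply everything. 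Two things go wrong in your cases. A tag arriving mid-stretch waits behind superjobs whose number is inflated by the class-$k$ backlog built up during the supersetup that opened the stretch, so that count is not the M/G/1 geometric one. A tag arriving during a supersetup waits for the residual setup and also for the superjobs of class-$k$ jobs that arrived earlier in that same supersetup. Neither case has the form you assign it; only the total factorizes, and proving that needs a genuine decomposition theorem.

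That theorem is what the paper uses: the Fuhrmann--Cooper decomposition for an M/G/1 with setup times (generalized vacations), via \cref{def:setup,thm:supersystem_equivalence}. Your worry that overhead forces the work-decomposition argument to be redone does not apply at the superjob level. $S^*$ is driven only by overhead draws and by class-$<k$ arrivals after the class-$k$ job first enters service, and at that moment no class-$<k$ work is present. So superjob sizes are i.i.d.\ and independent of the class-$k$ Poisson stream and of the past. The supersystem is therefore an ordinary FCFS M/G/1. Its non-serving periods (supersetups and class-$>k$ activity) begin only when no class-$k$ job is present, and their rules do not anticipate future class-$k$ arrivals. Fuhrmann--Cooper then gives $T_k \stocheq W^* + S^* + X_k^*$ with independent summands. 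Multiplying the numerator and denominator of $\widetilde{W^*}$ by $1-\rho_{<k}$ yields the stated factor. If you keep your distributional-Little's-law route, you are effectively reproving this decomposition. Under FCFS, the number of class-$k$ jobs left behind at a departure has pgf $\widetilde{T_k}(\lambda_k(1-z))$. You would then need to show that the stationary count factorizes into the M/G/1 count times the count at a random non-serving epoch. The latter has pgf $\widetilde{X_k^*}(\lambda_k(1-z))$, because the residual non-serving time after an early arrival is unaffected by later class-$k$ arrivals.
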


To prove \cref{thm:response-time}, we proceed as follows.
We first show that the response time of a class~$k$ job under preemptive priority with overhead is equivalent to the response time of a job in an M/G/1/setup system \see\cref{def:setup} with certain parameters.
This means that the response time of a class~$k$ job decomposes into the independent sum of a steady-state amount of work in system and another quantity related to overhead values and class~$<k$ work in system, which we call $X_k^*$.
We then characterize the random variable $X_k^*$ and derive its transform $\widetilde{X_k^*}$, which, together with \cref{thm:supersystem_equivalence} and \cref{thm:busy-period-transform}, yields the desired expression for $\lst{T_k}(\theta)$.

\subsection{Segment Trees}\label{sec:tree:response-time}

We consider the response time $T_k$ of an arbitrary class~$k$ job. We shall work through a series of observations about busy periods that allow us to view $T_k$ in our preemptive priority system with overhead as the response time distribution of a job in the M/G/1/setup (\cref{def:setup}) with carefully chosen system parameters.

To better visualize busy periods and understand the observations we make about busy periods, we  will work with a notion similar to that of a \emph{busy period tree} \see\cref{sec:JJT-stability:trees} as a way of encoding the arrival and service history of jobs during a busy period.
In the stability analysis, it was sufficient to view the busy period tree at the level of jobs and arrivals.
For response time analysis, however, we require a finer representation that allows us to understand the exact order in which service is performed.
Accordingly, in this section we view the busy period on a "segment level", as a \emph{segment tree}.

\begin{definition}\label{def:segment-tree}
A \emph{segment} is a maximal contiguous interval during which a single job is served.
Because jobs may be preempted and later resumed, the service of a job may consist of multiple segments, which we order from earliest to latest (e.g., a job's first segment starts when it first enters service).

A \emph{segment tree} is a directed, rooted tree that encodes the arrival and service history during a single full busy period.
The nodes of the tree are segments.
There are two types of edges between segments:
\begin{itemize}
    \item a \emph{continuation edge} connects two consecutive segments of the same job;
    \item an \emph{arrival edge} connects a segment to the first segment of a job that arrives during
    its service.
\end{itemize}
\Cref{fig:segment-tree} illustrates how a timeline of arrivals and service corresponds to the segment tree.
\end{definition}

\begin{figure}
    \centering
    \includegraphics[page = 4, width=\linewidth, trim= 10 0 10 0,clip]{img/preemption-oh-pics.pdf}
    \caption{Two representations of the arrival sequence and service order of jobs in a busy period. Job F is of class~$1$ (periwinkle); jobs C, D, and E are of class~$2$ (orange); and jobs A and B are of class~$3$ (green). Each block is a segment of a job. Original work is colored a lighter shade and overhead work is colored a darker shade. 
            (Left)~Timeline of the service order of segments in the system. Time goes from left to right. Continuation edges are shown above the timeline. Arrival edges are shown below the timeline.
    (Right)~Segment tree representation of the same arrival sequence. Horizontal edges are continuation edges. Vertical edges are arrival edges.}
    \label{fig:segment-tree}
\end{figure}

When the tree contains all segments served during a single full busy period (i.e., a maximal interval during which the server is continuously busy), we refer to it as a \emph{full segment tree}, or simply a \emph{full tree}.

Our goal is to understand the response time of a fixed class~$k$ job.
To do so, we first identify which parts of the full segment tree can possibly influence the service experienced by a class~$k$ job.
Because the scheduling discipline is preemptive priority, not all segments in the full tree are relevant to this job’s response time.

\begin{observation}\label{obs:class-k-restriction}
Before any class~$>k$ segment can begin service, there must be no class~$\leq k$ work remaining in the
system, as they would receive priority over the class~$> k$ job.
\end{observation}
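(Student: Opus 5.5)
The plan is to argue directly from the scheduling rule in \cref{alg:pprio_with_overhead}, by contradiction. Suppose some class~$>k$ segment, belonging to a job $J$ of class $m>k$, begins service at a time $t$ while some class~$\leq k$ job $J'$ is present in the system, meaning it has arrived but not completed. I will derive a contradiction from the rule that decides which job enters service at each decision epoch.

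First, I will list the epochs at which a segment can begin, following \cref{alg:pprio_with_overhead}. There are three kinds.
\begin{enumerate}
\item[(a)] A job arrives to an empty system. In that case the system holds no other job, so no class~$\leq k$ job is present.
\item[(c)] A job or pause completes. Then the best-priority job is selected, with least class index first. Either it enters service directly, or a resume of it begins; in the latter case a segment of that job can start only after the resume.
\item[(d)] A resume completes successfully. The resumed job then enters service.
\end{enumerate}

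The key step is case~(d). A class~$m$ resume succeeds only if no class~$<m$ job arrived during the resume. The resume itself was begun at a case-(c) epoch in which $J$ was the best-priority job. Any class~$\leq k$ job present at time $t$ therefore either was present at that epoch, which contradicts $J$ being chosen since $k<m$, or arrived during the resume, which means the resume fails. It could also have arrived at some other time, but every other interval in question is covered by one of these two cases.

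Case~(c) with direct service is handled the same way: $J$ being the minimum-class job contradicts $J'$ being present. Since the preempting job's class is less than $m$, arrivals during service simply preempt $J$ and cannot coexist with a newly started segment.

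I also need to rule out the possibility that $J'$ arrives at exactly time $t$. This is a probability-zero event under Poisson arrivals, and in any case arrival rule~(b) preempts $J$ immediately, so no nontrivial segment of $J$ runs while $J'$ is present.

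The main obstacle is the bookkeeping in the resume case, namely ensuring that no class~$\leq k$ job can slip in between the selection epoch and the start of the segment. The failed-resume rule is what closes that gap, and I would verify it carefully against \cref{alg:pprio_with_overhead}.

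Finally, I will note that nonpreemptible pause overheads do not count as segments of original service. Overhead on a class~$>k$ job, such as its pause, may run while class~$\leq k$ work is present; the observation concerns only segments that serve a class~$>k$ job.
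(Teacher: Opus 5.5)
Your proposal is correct and follows essentially the paper's own justification: a class~$\leq k$ job present in the system would be selected ahead of any class~$>k$ job. You expand this into a check of the decision epochs in \cref{alg:pprio_with_overhead}, and your handling of the resume case (a class~$\leq k$ arrival during the resume makes it fail) and your remark that class~$>k$ pause overhead cannot count as a class~$>k$ segment fill in details the paper leaves implicit; the argument also covers the reading in which a segment starts with the resume overhead, since that resume begins at an epoch where $J$ is the best-priority job.
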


Motivated by \cref{obs:class-k-restriction}, suppose that we remove all continuation and arrival edges into class~$> k$ segments and consider all resulting connected components. Then each class~$k$ job would not be impacted by any segments outside its connected component, as the class~$k$ job definitionally must arrive and depart during service of this connected component. We will call the connected components made by removing all arrival and continuation edges into class~$> k$ nodes \emph{class~$k$ relevant subtrees}. See \cref{fig:relevant-subtree} for an illustration of the class~$2$ relevant subtrees resulting from cutting all arrival and continuation edges into class~$> 2$ jobs in a full segment tree.

\begin{figure}
    \centering
    \includegraphics[page = 5, width=\linewidth, trim= 10 0 10 3,clip]{img/preemption-oh-pics.pdf}
    \caption{(Left)~Class~$2$ relevant subtrees of a segment tree highlighted by blue backgrounds. Segments are labeled with their class. Red dotted arrows indicate arrival or continuation edge into class~$> 2$ segment. (Right)~Separation of the class~$2$ relevant subtrees. The set of segments in each subtree is served contiguously, and notably, any class~$\leq 2$ job arrives and departs entirely during its relevant subtree.}
    \label{fig:relevant-subtree}
\end{figure}

\begin{observation}\label{obs:restriction}
The response time of a class~$k$ job depends only on segments contained in its class~$k$ relevant
subtree.
\end{observation}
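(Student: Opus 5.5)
Fix a class~$k$ job $J$ and let $\tau_a$ and $\tau_d$ be its arrival and departure times. The aim is to show that everything the scheduler does on $[\tau_a,\tau_d]$ is determined by the segments in $J$'s class~$k$ relevant subtree. The argument is by induction over the segments served after $\tau_a$ in time order.

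The first step is to show that $J$ arrives during a segment $\sigma_0$ lying in its own relevant subtree. At time $\tau_a$ the server is either idle, serving a class~$\le k$ segment, or serving a class~$>k$ segment. In the first and last cases $J$ hangs off a segment whose incoming edge was cut, or $J$ is the root. Either way, $J$'s first segment starts a component that contains only segments reachable from it through edges into class~$\le k$ segments. In the middle case $J$ is attached by an arrival edge to a class~$\le k$ segment, and so is in that segment's component.

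The second step uses \cref{obs:class-k-restriction}: while $J$ is in the system, no class~$>k$ segment can start, because $J$ itself is class~$\le k$ work still present. The same observation, applied to class~$>k$ overheads, gives the analogous fact for them. A class~$>k$ pause can occur only right after a class~$>k$ job was serving, and any class~$>k$ resume would need priority over $J$, which it does not have.

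The third step shows that every segment served during $[\tau_a,\tau_d]$ is joined to the relevant subtree by an uncut edge. Each such segment is class~$\le k$. Each one is either the first segment of a job arriving during an earlier segment of the interval (an arrival edge into a class~$\le k$ node, so not cut) or a continuation of a previously preempted class~$\le k$ job (a continuation edge into a class~$\le k$ node, also not cut). Induction along the service order then keeps all of them in the component containing $J$. Overheads belong to the job they are attributed to, so they lie in its segments or are adjacent to them.

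The main obstacle is the continuation case: a class~$\le k$ job preempted before $\tau_a$ and resumed during $[\tau_a,\tau_d]$. Its earlier segments must lie in the same component, and this needs care. I would argue that all of these earlier segments were served after the most recent start of class~$\le k$ work, meaning the last time the server switched from class~$>k$ work or idleness to class~$\le k$ work. Since no class~$>k$ segment intervenes, everything since that switch is connected through uncut edges back to the first class~$\le k$ segment after it.

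Putting the pieces together, the relevant subtree consists of exactly the segments of one contiguous class~$\le k$ stretch, and this stretch contains $[\tau_a,\tau_d]$. So $J$'s response time is a function only of those segments together with their order of service, which is fixed by \cref{alg:pprio_with_overhead}.
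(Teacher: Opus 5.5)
Your argument is correct and is essentially the paper's. The paper justifies this observation only by noting that, by \cref{obs:class-k-restriction}, each class~$k$ relevant subtree is served as one contiguous block during which its class~$k$ jobs arrive and depart; your time-ordered induction over segments, including the continuation case, is a more careful version of exactly that.

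One small fix. When $J$ arrives during a class~$>k$ segment, the component is rooted at that class~$>k$ segment, not at $J$'s first segment. That segment's pause or resume tail is still served after $\tau_a$. So your third-step claim that every segment served during $[\tau_a,\tau_d]$ is class~$\le k$ should exclude this root, which your first step already places in the component.
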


\subsection{Superjobs and Supersetups}

Now that we know analyzing a class~$k$ job's response time in the full tree is equivalent to analyzing
its response time inside its class~$k$ relevant subtree, it remains to understand the order in which segments in a class~$k$ relevant subtree are
served.

\begin{observation}\label{obs:fcfs}
Within a class~$k$ relevant subtree, a class~$k$ job cannot begin service until the previous class~$k$
job (if any) has completed service, since jobs are served within each class in FCFS order.
\end{observation}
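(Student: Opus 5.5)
The statement to justify is \cref{obs:fcfs}: inside a class~$k$ relevant subtree, a class~$k$ job cannot begin service until the previous class~$k$ job (if any) has completed. I will derive this directly from the scheduling rules in \cref{alg:pprio_with_overhead}, combined with the structure of relevant subtrees from \cref{obs:restriction}.

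Fix a class~$k$ relevant subtree and let $J, J'$ be two class~$k$ jobs in it, where $J$ arrived before $J'$. Suppose, for contradiction, that $J'$ receives its first segment at some time $t$ while $J$ is still incomplete.

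\textbf{Step 1: $J$ is present at time $t$.} Since $J$ arrived before $J'$, and $J'$ is present at time $t$, $J$ has arrived by $t$. It has not departed, by assumption.

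\textbf{Step 2: the choice at time $t$.} Every new segment begins at an event of types (a)--(d) of \cref{alg:pprio_with_overhead}.
\begin{itemize}
    \item In case (a), the system was empty just before $t$, so $J$ could not be present.
    \item In case (c), the server selects the best-priority job: least class index, then FCFS within a class. Both $J$ and $J'$ are class~$k$ and $J$ arrived earlier, so $J'$ is never strictly better than $J$. Hence $J'$ cannot be selected. If $J$ itself is chosen, then $J$ is either served or receives a resume.
    \item In case (d), the job whose resume just completed must be the one that was selected in case (c), which is not $J'$ on a first entry.
\end{itemize}
A first segment of $J'$ therefore cannot start at $t$, which is a contradiction.

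\textbf{Main subtlety: $J$ could be in a paused state.} The concern is that while $J$ waits through its pause, resume, or failed-resume chains, $J'$ might slip in. This cannot happen. Whenever a pause completes, rule (c) re-selects the best-priority job. Ties within class~$k$ are still broken by arrival order, so either $J$ (via a resume) or a class~$<k$ job is chosen, never $J'$.

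\textbf{Relevant-subtree restriction.} Finally, I restrict attention to the relevant subtree. By construction, all class~$\le k$ jobs of the subtree arrive and depart within its contiguous service interval, as in \cref{obs:restriction}. So "previous class~$k$ job" is well defined within the subtree, and the ordering argument above applies unchanged.
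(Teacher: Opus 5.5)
Your proof is correct and takes the same route as the paper, which justifies the observation in one line by the FCFS-within-class tie-breaking of \cref{alg:pprio_with_overhead}. Your case analysis over the algorithm's events is an unpacked version of that: a first segment of $J'$ can start only via rule (a) or the non-resume branch of rule (c), and in rule (c) the earlier-arrived $J$ always outranks $J'$. The argument actually holds globally, so your final paragraph restricting to the relevant subtree is harmless but not needed.
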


Motivated by \cref{obs:fcfs}, we construct a decomposition of a class~$k$ relevant subtree that makes this FCFS structure explicit.
Specifically, we remove all arrival edges into class~$k$ jobs.
The removal of these edges separates the class~$k$ relevant subtree into two types of connected components.
We introduce the following terminology to refer to these components:
\* a connected component that contains a class~$k$ segment is called a \emph{class~$k$ superjob},
\* a connected component that contains no class~$k$ segments is called a
\emph{class~$k$ supersetup}.
\*/
See \cref{fig:super-system} for an illustration of the resulting superjobs and supersetups.
One can show that no other types of connected components emerge from these cuts (e.g. no component shares segments from two different class~$k$ jobs, because the arrival edge going into the second class~$k$ job would have been removed).
We next explain why these class~$k$ superjobs and supersetups are the natural units for response-time analysis.

\begin{figure}
    \centering
    \includegraphics[page =6, width=\linewidth, trim= 20 7 20 3,clip]{img/preemption-oh-pics.pdf}
    \caption{(Left)~A single class~$2$ relevant subtree. Segments are labeled with their class. Red dotted arrows denote arrival edges into class~$2$ jobs.
    (Right)~Partition of the segments into class~$2$ superjobs and a class~$2$ supersetup. Shaded areas denote boundaries between different class~$2$ superjobs or supersetups. The pink shaded area denotes a supersetup, whereas each blue shaded area denotes a separate superjob. The superjobs are served in FCFS order according to the arrival time of their respective first segments.}
    \label{fig:super-system}
\end{figure}

A class~$k$ superjob is a connected component that contains a class~$k$ segment. Because these components result from cutting all \emph{arrival} edges into class~$k$ jobs in a class~$k$ relevant subtree, each superjob must contain \emph{all} segments of exactly one class~$k$ job. That is, each class~$k$ superjob corresponds to exactly one class~$k$ job. 
Moreover, by the same reasoning that motivated the relevant subtree definition, a class~$k$ job cannot finish service until all class~$< k$ jobs that arrive during its execution, and any lower-class jobs that arrive during those, and so on, have also completed, which gives us the following observation.

\begin{observation}\label{obs:superjob-size}
    \*[subenv]
    \* The \emph{size} of a class~$k$ superjob---that is, the total amount of work of all jobs contained in the superjob---is distributed as $B_{<k}(\jobPair_k)$ \see\cref{def:class-busy-period}, the length of a class~$< k$ busy period started by a class~$k$ job.
    \* Every class~$k$ superjob is served in a contiguous interval that starts when the class~$k$ job begins service and ends when it completes.\footnote{In the terminology of the M/G/1 scheduling literature, e.g. \citet{harchol-balter_performance_2013}, the size of the class~$k$ superjob started by a class~$k$ job is the \emph{residence time}.}
    \*/
\end{observation}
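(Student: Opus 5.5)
We have to establish both parts of \cref{obs:superjob-size}: the superjob's size is distributed as $B_{<k}(\jobPair_k)$, and it is served in one contiguous interval. The plan is to prove contiguity (part (b)) first, because it tells us exactly which segments belong to the superjob. Then we identify the superjob's segments with the nodes of a class~$<k$ busy period tree rooted at the class~$k$ job, and read off the distribution from the recursive definition of $B_{<k}(\cdot)$.

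\textbf{Contiguity.} Fix a class~$k$ job $J$ and let $t_0$ be the moment it first enters service. By \cref{obs:fcfs} and \cref{obs:class-k-restriction}, at $t_0$ the system contains no class~$<k$ work and no earlier class~$k$ job. The argument is an induction on time over $[t_0, t_1]$, where $t_1$ is the completion time of $J$. At every instant of this interval, one of two things is being done:
\begin{itemize}
\item original or overhead work of $J$ itself; or
\item work of a class~$<k$ job that arrived after $t_0$.
\end{itemize}
Class~$>k$ work, and any class~$k$ job other than $J$, cannot be served while $J$ is unfinished. Each class~$<k$ job served in $[t_0,t_1]$ arrived either during one of $J$'s segments or overheads, or during a segment of an earlier such class~$<k$ job. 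Following arrival edges back through the segment tree therefore reaches $J$. Since we only cut arrival edges into class~$k$ jobs and (continuation or arrival) edges into class~$>k$ segments, all of these segments lie in $J$'s component. Conversely, any segment in $J$'s component is a descendant of $J$ through class~$<k$ segments, so it arrives after $t_0$. Because such a job has strict priority over $J$, it must finish before $t_1$. Hence the component is exactly the set of segments served in $[t_0,t_1]$, and the server is never idle during this interval.

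\textbf{Distribution.} We now build the job-level tree, as in \cref{sec:JJT-stability:trees}, for the jobs in the superjob. The root is $J$. Its effective service time, counting the pause and resume overheads that \cref{model:effective_size} attributes to $J$, is $R_k$. Its children are the class~$<k$ arrivals during that time, so their counts are $A_{k,1},\dots,A_{k,k-1}$, jointly distributed as $\jobPair_k$. Each class~$i<k$ child $J'$ in turn contributes effective time $R_i$ together with its own class~$<k$ arrivals, drawn from $\jobPair_i$. Class~$\ge k$ arrivals are exactly the edges we cut, so they are discarded.

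The total work of the superjob is the sum of the effective sizes of these jobs. Overheads of class~$<k$ jobs are included, because they too are attributed to their own jobs. By the local-arrivals property of \cref{sec:local-arrivals}, the subtrees hanging from distinct children are i.i.d. copies given the child's class, and they are independent of the parent's pair $(R,\vec A)$. This is precisely the recursion defining $B_{<k}(\jobPair_k)$. By contiguity, this total also equals $t_1-t_0$.

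\textbf{Main obstacle.} The delicate step is justifying that the job-level recursion really has independent, identically distributed subtrees even though the jobs are served interleaved in time. When a class~$<k$ child runs, it preempts $J$ and adds overhead to $J$'s effective time, so $J$'s and the child's time intervals are not disjoint pieces. The first thing I would try is the accounting convention of \cref{model:effective_size}: every instant in $[t_0,t_1]$ is assigned to exactly one job, namely the job being served or the job whose overhead is running. A class~$<j$ arrival during a class~$j$ job's attributed time is counted as that job's child. This gives a partition of $[t_0,t_1]$ into the jobs' effective-service sets. Each job's arrivals are then Poisson over its own attributed time, independent of every other job's attributed time because the time sets are disjoint. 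It follows that the joint law of $(R_j,\vec A_j)$ for each node is $\jobPair_j$, and the strong Markov property of Poisson processes gives the required independence across nodes.
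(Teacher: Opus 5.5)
Your proposal is correct and follows the same route as the paper. Strict priority forces every class~$<k$ job descending (via arrival edges) from the class~$k$ job to finish before that job does, so the superjob is exactly that job plus its recursively generated class~$<k$ work, i.e.\ a class~$<k$ busy period started by $\jobPair_k$. You also make explicit two points the paper leaves informal: the time-induction showing the component is exactly the work served in $[t_0,t_1]$, and the local-arrivals justification (\cref{sec:local-arrivals}) for the independence of the subtrees in the recursion defining $B_{<k}$.
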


A class~$k$ supersetup is a connected component that does not contain any class~$k$ segments.

\begin{observation}\label{obs:superjob-order}
    \*[subenv]
    \* Within a class~$k$ relevant subtree, at most one supersetup exists, and if it does, it is at the root of that subtree.
    \* Within a class~$k$ relevant subtree, the supersetup (if it exists) is served first, after which
class~$k$ superjobs are served contiguously and in FCFS order of their class~$k$ root segments.
    \*/
\end{observation}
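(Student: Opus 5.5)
The plan is to prove both parts of \cref{obs:superjob-order} directly from the construction of the class~$k$ relevant subtree, using \cref{obs:class-k-restriction} and \cref{obs:fcfs}. Two structural facts will be used repeatedly. First, after cutting the arrival edges into class~$k$ jobs, every component is closed under continuation edges, since continuation edges into class~$\leq k$ segments are never cut. Second, every non-root segment of the relevant subtree has exactly one incoming edge. Take a component $K$ and look at its topmost segment $s$, i.e.\ the one closest to the root of the relevant subtree. Then $s$ is either the root of the relevant subtree, or its incoming edge was removed, which forces $s$ to be the first segment of a class~$k$ job. In the second case $K$ contains a class~$k$ segment and is a superjob. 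So a supersetup can only be the component containing the root of the relevant subtree, and there is at most one such component. This proves part (a). It also pins down when a supersetup exists: exactly when the relevant subtree is rooted at a class~$<k$ job, that is, when the subtree was started by a class~$<k$ arrival rather than by a class~$k$ arrival.

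For part (b), I would argue about service order in time. Suppose the supersetup exists, and let $J_0$ be its root job, which has class~$<k$. Every class~$<k$ job in the supersetup descends from $J_0$ along arrival and continuation edges that avoid class~$k$ jobs. Consider any class~$k$ job in this relevant subtree. It cannot begin service while any class~$<k$ work is present. Its first segment arrives during some segment of the supersetup, since the cut arrival edges out of the supersetup are exactly the ones leading to superjobs. So it has to wait until the class~$<k$ busy period generated by the supersetup has fully drained. The supersetup is therefore served first and contiguously. I would also note that in our overhead model the supersetup contains no class~$k$ overhead, because overhead is attributed to the job being paused and no class~$k$ job has entered service yet.

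After the supersetup, the server serves class~$\leq k$ work until the relevant subtree ends. By \cref{obs:fcfs} and preemptive priority, once it is idle of class~$<k$ work it picks the earliest-arrived class~$k$ job. By \cref{obs:superjob-size}(b), each superjob then occupies a contiguous interval from that job's first entry into service until its completion. Any class~$<k$ arrivals in that interval belong to that superjob, because they arrive during its effective service or descend from such arrivals. Hence no other class~$k$ job starts before the current one finishes, and the superjobs run back to back in FCFS order of their class~$k$ root segments.

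The main obstacle is making the contiguity claims airtight when overhead is present. In particular, class~$<k$ arrivals during a class~$k$ resume or pause have to be attributed to that class~$k$ job's superjob through arrival edges from overhead segments. We also need that between consecutive superjobs no stray class~$<k$ work appears that belongs to neither superjob. The approach is to check that every moment between the end of the supersetup and the end of the relevant subtree lies in some segment, so the server is never idle within the subtree. Each segment is an effective-service interval of some job, and its arrivals are charged via arrival edges to the component containing that segment. This forces the timeline to be a concatenation of the components in the order described.
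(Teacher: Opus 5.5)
Your argument for part (a) is correct, and it is essentially the reasoning the paper leaves implicit; the paper states this as an observation with no separate proof. Within a relevant subtree the only edges you cut are arrival edges into class~$k$ jobs, so the topmost segment of any non-root component is the first segment of a class~$k$ job.

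The gap is your claim that a supersetup exists ``exactly when the relevant subtree is rooted at a class~$<k$ job,'' together with part~(b), which is built on it (``let $J_0$ be its root job, which has class~$<k$''). Relevant subtrees are obtained by deleting edges \emph{into} class~$>k$ segments. So every relevant subtree other than the one containing the root of the full busy period is rooted at a class~$>k$ segment: a lower-priority job's service or failed resume, followed by its pause. That root component contains no class~$k$ segment, so it is a supersetup. These are exactly the supersetups that carry the $C_j$ and $D_j^{(1)}$ overheads in cases (b)--(g) of \cref{lem:x-distribution}. As written, your part~(b) only handles a class~$<k$ busy period started at an idle system, which is case~(a). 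It misses the overhead-induced supersetups, which are the point of the model.

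To close the gap you need three facts.
\begin{itemize}
\item[(i)] A class~$>k$ segment, whether it begins with service or with a resume, can only start when no class~$\le k$ job is in the system. Hence no foreign class~$\le k$ work is present when the subtree starts.
\item[(ii)] Every class~$\le k$ arrival during such a root segment $r$ happens at the preemption instant, during a failed resume, or during the pause. The resume and pause are nonpreemptible, so $r$ finishes before any of its children is served.
\item[(iii)] Every other supersetup segment is class~$<k$, because class~$>k$ segments root other relevant subtrees and class~$k$ segments lie in superjobs. So after $r$ the server works only on class~$<k$ work descended from the supersetup until that work drains.
\end{itemize}

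There is also a smaller error. Not every class~$k$ job in the subtree arrives during a supersetup segment, since later class~$k$ jobs can arrive during an earlier superjob. You only need this fact for the earliest-starting class~$k$ job. Its parent segment started before it, so that parent cannot lie in a superjob: that superjob's class~$k$ root would have started even earlier. With these repairs, your FCFS and contiguity argument for the superjobs goes through.
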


This observation tells us that, within a class~$k$ relevant subtree, all work
that is not part of a class~$k$ superjob is confined to a single initial block of service.
Thus, from the perspective of class~$k$ jobs, the system consists of an initial amount of
``extra work'' (the supersetup, if present), followed by a sequence of class~$k$ superjobs that
must be completed one at a time.
This structure suggests viewing the system as an FCFS queue with additional work incurred at the
start of a busy period, which motivates the equivalence to an M/G/1/setup system.

Thus, the way we can view the preemptive priority system with overhead through the eyes of a class~$k$ job is as one in which class~$k$ superjobs and supersetups are the primitive entities, which are nonpreemptively served in FCFS order. We call this the \emph{supersystem}.

\subsection{Equivalency to M/G/1/setup}

By reasoning about service order in this manner, we see that a class~$k$ superjob arrives at the same time as its corresponding class~$k$ job, and departs when its last segment completes. But the last segment of the class~$k$ superjob to complete will always be the last segment of the corresponding class~$k$ job. Thus the arrival and departure of class~$k$ jobs in the original system exactly correspond to the arrival and departure of class~$k$ superjobs in the supersystem, and so they have the same response time distribution. However, the supersystem is exactly like an ordinary M/G/1 with additional work due to supersetups. That is, the supersystem happens to be an M/G/1/setup.

\begin{definition}\label{def:setup}
    \*[subenv, afterheading]
    \* The \emph{M/G/1/setup} with arrival rate~$\lambda^*$, size distribution~$S^*$, and \emph{setup time} distribution~$U^*$ is like a standard M/G/1 with FCFS service, except whenever a job arrives to an empty system, a delay, called a \emph{setup time}, of length sampled from $U^*$ occurs before any job begins service. We denote the response time distribution in the M/G/1/setup by~$T^*$.
    \* The \emph{extra work} distribution, denoted~$X^*$, is the remaining setup time length as seen by a job that either arrives to an idle system or during a setup time. It is a standard result that $X^*$ is a mixture between $U^*$ and its excess $(U^*)_\e$:
    \[
        X^* \stocheq \begin{cases}
            U^* & \text{with probability } \frac{1}{1 + \lambda^* \E{U^*}} \\
            (U^*)_\e & \text{with probability } \frac{\lambda^* \E{U^*}}{1 + \lambda^* \E{U^*}}.
        \end{cases}
    \]

    \* Classic results from \citet{fuhrmann_stochastic_1985} tell us that the response time $T^*$ of an arriving job can be decomposed into the independent sum
    \[
    T^* \stocheq W^* + S^* + X^*,
    \]
    where $W^*$ is the steady-state workload in the M/G/1 with arrival rate $\lambda^*$ and size distribution $S^*$ (excluding setup time),
    $S^*$ is the job’s own service requirement, and $X^*$ is the extra work, as defined above.
    \*/
\end{definition}

Then analyzing the response time of class~$k$ jobs boils down to analyzing the response time of class~$k$ superjobs in their relevant class~$k$ subtrees, which is an M/G/1/setup system. In order to state our result, we need the following definition:

\begin{definition}\label{def:early}
    An \emph{early class~$k$ job} is any class~$k$ job that arrives when no other class~$k$ jobs in the system have received service so far.
\end{definition}

An early class~$k$ job as viewed through the supersystem lens is a class~$k$ superjob that arrives to either an idle system or to a supersetup. In \cref{fig:super-system}, there are two early jobs.

With \cref{def:early} in hand, we can state the key result of this section.
It reduces  proving \cref{thm:response-time} to characterizing the distribution of~$X_k^*$, which we do in \cref{sec:extra-work}.

\begin{definition}\label{def:extra_work_class_k}
    We denote by $X_k^*$ the distribution of the time between the arrival of an early class~$k$ job and the next moment when some class~$k$ job (possibly a different one) begins service.
\end{definition}

\begin{theorem}\label{thm:supersystem_equivalence}
    The response time distribution~$T_k$ of class~$k$ jobs in the preemptive priority system with overhead is equivalent to the response time distribution $T^*$ in an M/G/1/setup, where:
    \begin{itemize}
        \item The arrival rate is $\lambda^* = \lambda_k$.
        \item The size distribution is $S^* \stocheq B_{<k}(\jobPair_k)$ \see\cref{obs:superjob-size}.
        \item The distribution of extra work is $X^* \stocheq X^*_k$ \see\cref{def:extra_work_class_k}.
    \end{itemize}
\end{theorem}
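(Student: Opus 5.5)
The proof will be a sample-path argument built on the superjob/supersetup decomposition from \cref{obs:restriction}, \cref{obs:fcfs}, \cref{obs:superjob-size} and \cref{obs:superjob-order}. It proceeds in three steps: (1) identify the supersystem with a workload process; (2) check that this process has the M/G/1/setup structure with the stated parameters; (3) apply the Fuhrmann--Cooper decomposition from \cref{def:setup}.

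For step (1), fix a class~$k$ job and its class~$k$ relevant subtree. By \cref{obs:restriction}, its response time is a functional of this subtree alone. By \cref{obs:superjob-order}, the service inside the subtree is an optional initial supersetup followed by class~$k$ superjobs served contiguously in FCFS order. Each superjob begins exactly when its class~$k$ job first enters service and ends when that job completes (\cref{obs:superjob-size}). Hence the job's response time equals the time from its arrival until its superjob finishes.

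That time splits into three parts:
\begin{itemize}
\item the remaining work of superjobs ahead of it, plus any remaining supersetup;
\item its own superjob length;
\item for an early job, the time until class~$k$ service first begins, which is exactly $X_k^*$ by \cref{def:extra_work_class_k}.
\end{itemize}

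For step (2), I check the M/G/1/setup structure. Class~$k$ superjobs arrive as the Poisson($\lambda_k$) class~$k$ arrival process, so $\lambda^*=\lambda_k$. Superjob sizes are i.i.d.\ copies of $B_{<k}(\jobPair_k)$. Independence across superjobs and from the arrival process of class~$k$ jobs follows from the local-arrivals / Galton--Watson structure (\cref{sec:local-arrivals}): the class~$<k$ descendants of distinct class~$k$ jobs are disjoint and independent, and the class~$<k$ arrivals are independent of the class~$k$ Poisson process.

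The one nonstandard feature is the "setup". Here it is not an exogenous i.i.d.\ delay but the portion of a class~$<k$ (or class~$>k$-started) busy stretch still outstanding when the early job arrives. This includes a possibly in-progress class~$>k$ segment together with its pause overhead chain, after which the class~$<k$ busy period it spawns must also clear.

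I will sidestep identifying $U^*$ explicitly. The Fuhrmann--Cooper decomposition only needs two facts: an arriving class~$k$ job that finds class~$k$ work in the system sees the standard M/G/1 workload structure of superjobs; and the extra delay is incurred only by jobs arriving when no class~$k$ job has started, i.e.\ early jobs. I will therefore state the decomposition in the generalized form $T^*\stocheq W^*+S^*+X^*$, where $X^*$ is defined directly as the residual delay seen by a job arriving during idle time or setup. This is exactly $X_k^*$.

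Step (3) is to justify that generalized decomposition, and I expect it to be the main obstacle. Fuhrmann--Cooper's proof requires that the setup-induced delay be independent of the subsequent FCFS superjob dynamics, and that arrivals during the setup not change the setup length. Both hold here, because class~$k$ arrivals do not trigger class~$>k$ pauses beyond the chain structure already fixed, and do not add class~$<k$ work. By PASTA, an arriving class~$k$ job sees time-average behaviour. Conditioning on whether class~$k$ service has begun in the current class~$k$ busy cycle, a level-crossing or distributional-Little argument gives that the FCFS waiting time equals the M/G/1 workload $W^*$ of superjobs plus an independent copy of the initial delay distributed as $X_k^*$. If the delay turns out to depend on class~$k$ arrivals (for instance through the pause chain of an interrupted class~$>k$ job, which is triggered by the first class~$k$ arrival), I will handle it by defining $X_k^*$ from the first arrival and checking that later class~$k$ arrivals during the delay only see its residual. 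That is the mixture structure underlying \cref{def:setup}.
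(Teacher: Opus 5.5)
Your proposal is correct and follows essentially the paper's route: restrict to the class~$k$ relevant subtree, split it into an initial supersetup followed by class~$k$ superjobs served contiguously in FCFS order, and read off $\lambda^*=\lambda_k$, $S^*\stocheq B_{<k}(\jobPair_k)$, and $X^*\stocheq X_k^*$. Your step~(3) is more careful than the paper, which simply invokes the M/G/1/setup decomposition even though the supersetup is endogenous rather than an exogenous i.i.d.\ setup. One correction there: the first early class~$k$ arrival during a class~$>k$ service or resume \emph{does} trigger a class~$>k$ pause, so the decomposition is justified not by your claim that class~$k$ arrivals never trigger pauses but by your fallback observation, namely that the time until that first arrival is memoryless and that later class~$k$ arrivals neither lengthen the remaining delay nor see anything but its residual.
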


\begin{proof}
We proceed in two steps.
First, we show that from the perspective of a class~$k$ job, the preemptive priority system with
overhead can be viewed as an instance of an M/G/1/setup system.
Second, we identify the parameters of this M/G/1/setup.

Consider the busy period tree representation of the preemptive priority system with overhead.
By \cref{obs:class-k-restriction,obs:restriction}, the response time of a class~$k$ job depends only
on segments contained in its class~$k$ relevant subtree.
Hence, it suffices to restrict attention to a fixed class~$k$ relevant subtree.

Within a class~$k$ relevant subtree, \cref{obs:fcfs} implies that class~$k$ jobs are served in FCFS
order.
Removing all arrival edges into class~$k$ jobs partitions the subtree into connected components that
are either class~$k$ superjobs or a class~$k$ supersetup.
Each class~$k$ superjob contains all segments of exactly one class~$k$ job together with all
recursively generated class~$<k$ work, and each connected component is served contiguously \see\cref{obs:superjob-size}.
Moreover, by \cref{obs:superjob-order}, the supersetup (if it exists) is served first, after which
class~$k$ superjobs are served contiguously and in FCFS order of their class~$k$ root segments.

Therefore, from the viewpoint of class~$k$ jobs, the system can be viewed as one in which the
primitive entities are the class~$k$ superjobs, served nonpreemptively in FCFS order, together with
a possible initial block of extra work (the supersetup) occurring at the start of a busy period.
We refer to this system as the \emph{supersystem}.
The supersetup corresponds to a setup time, and each class~$k$ superjob corresponds to a job served
in FCFS order in the supersystem.

A class~$k$ superjob arrives at exactly the same time as its corresponding class~$k$ job,
and departs when the last segment of that class~$k$ job completes service.
In particular, the arrival time and departure time of each class~$k$ job coincide with those of its
associated class~$k$ superjob.
It follows that class~$k$ jobs and class~$k$ superjobs have identical response time distributions.
Thus, the response time distribution $T_k$ is equal in distribution to the response time $T^*$ in an
M/G/1/setup system.

We now identify the parameters of the equivalent M/G/1/setup.
Class~$k$ jobs arrive according to a Poisson process of rate $\lambda_k$.
Because of the one-to-one correspondence between class~$k$ jobs and class~$k$ superjobs, the arrival
process of superjobs is also Poisson with rate $\lambda^* = \lambda_k$.

By construction of superjobs, the total work from the instant a class~$k$ job first begins service until it finally departs consists of that job’s effective service time together with all recursively generated class~$<k$ work that arrives during that service time.
Equivalently, each class~$k$ superjob is a class~$<k$ busy period initiated by a class~$k$ job, and hence
\[
S^* \stocheq B_{<k}(\jobPair_k).
\]

Finally, let $U$ denote the duration of the supersetup in a class~$k$ relevant subtree, with $U=0$ if no supersetup is present.
If a class~$k$ job arrives before any class~$k$ service has begun in that subtree (an early arrival), then the time until the next class~$k$ service start is exactly the residual portion of $U$ present at the time of arrival.
By definition of the M/G/1/setup \cref{def:setup}, this quantity has the same distribution as the extra-work random variable $X^*$.
Thus,
\[
X^* \stocheq \text{time from an early class~$k$ arrival to the next class~$k$ service start} \eqqcolon X_k^*.
\]
Having identified the parameters $(\lambda^*, S^*, X^*)$ as stated, the proof is complete.
\end{proof}

\subsection{Computing Distribution of Extra Work}
\label{sec:extra-work}

We showed in \cref{sec:tree:response-time} that computing the class~$k$ response time reduces to determining the distribution of $X_k^*$. 
An early class~$k$ job can arrive while the server is in one of several possible states, illustrated in \cref{fig:remaining-supersetup}.
The next lemma enumerates these cases.

\restatably\begin{lemma}\label{lem:x-distribution}
    The distribution $X_k^*$ for an early class~$k$ arrival to a preemptive priority system with overhead is:
    \renewcommand{\frac}{\tfrac}
    \[
    \allowdisplaybreaks
    \label{eq:x-distribution:a}
    &\gp[\big]{B_{<k}(\jobPair_i)}_\e
    &\text{w.p.}\quad
    &(1 - \rho) \cdot \gp[\Big]{\frac{\rho_i}{1 - \rho_{\leq k}}}
    && \text{for $i < k$,}
    &
    \\*
    \label{eq:x-distribution:b}
    &\gp[\big]{B_{<k}(C_j + \jobPair_i)}_\e
    &\text{w.p.}\quad
    & \lambda_i\sigma_j \cdot \gp[\Big]{\frac{\E{C_j + R_i}}{1 - \rho_{\leq k}}}
    && \text{for $i < k < j$,} \\
    \label{eq:x-distribution:c}
    &\gp[\big]{B_{<k}(C_j)}_\e
    &\text{w.p.}\quad
    & \lambda_\ell\sigma_j \cdot \frac{\E{C_j}}{1 - \rho_{\leq k}}
    && \text{for $k \leq \ell < j$,} \\
    \label{eq:x-distribution:d}
    &\gp[\big]{B_{<k}(D_j^{(1)} + C_j + \jobPair_i)}_\e
    &\text{w.p.}\quad
    &\lambda_i\sigma_j \cdot \gp[\Big]{\frac{1 - \lst{D_j}(\lambda_{<j})}{\lst{D_j}(\lambda_{<j})}}\cdot \gp[\Big]{\frac{\E{D_j^{(1)} + C_j + R_i}}{1 - \rho_{ \leq k}}}
    && \text{for $i < k < j$,} \\
    \label{eq:x-distribution:e}
    &\gp[\big]{B_{<k}(D_j^{(1)} + C_j)}_\e
    &\text{w.p.}\quad
    &\lambda_\ell\sigma_j \cdot \gp[\Big]{\frac{1 - \lst{D_j}(\lambda_{<j})}{\lst{D_j}(\lambda_{<j})}}\cdot \gp[\Big]{\frac{\E{D_j^{(1)} + C_j}}{1 - \rho_{ \leq k}}} 
    && \text{for $k \leq \ell < j$,} \\
    \label{eq:x-distribution:f}
    &B_{<k}(C_j)
    &\text{w.p.}\quad
    & \sigma_j \cdot \gp[\Big]{\frac{1 - \rho_{< k}}{1 - \rho_{\leq k}}}
    && \text{for $k < j$,} \\
    \label{eq:x-distribution:g}
    &(D_j^{(0)})_\e + B_{<k}\gp{C_j}
    &\text{w.p.}\quad
    & \sigma_j
    \cdot \gp[\Big]{\frac{1 - \lst{D_j}(\lambda_{<j})}{\lst{D_j}(\lambda_{<j})}}
    \cdot \gp[\Big]{\frac{1 - \rho_{< k}}{1 - \rho_{\leq k}}}
    && \text{for $k < j$,} \\*
    \label{eq:x-distribution:h}
    &0
    &\text{w.p.}\quad
    &(1 - \rho)\cdot \gp[\Big]{\frac{1 - \rho_{< k}}{1 - \rho_{\leq k}}},
    \]
    where $D_j^{(0)}$ and $D_j^{(1)}$ are defined in \cref{lem:resume-tfms}.
\end{lemma}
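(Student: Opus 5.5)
By PASTA, an early class~$k$ arrival sees the time-stationary state of the system, conditioned on the event that no class~$k$ job has yet received service in the current class~$k$ relevant subtree. The plan is therefore to partition time according to what the server is doing at a typical moment. For each state in this partition we compute two things: its long-run fraction of time, and the residual time until the next class~$k$ service start. We then condition on the early event.

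\textbf{Step 1: identify the states in which an early class~$k$ arrival can occur.} Such an arrival finds one of the following:
\begin{itemize}
\item an idle server, giving case (h);
\item a class~$<k$ busy period that was started when a class~$i<k$ job arrived to an empty system, giving case (a);
\item class~$<k$ work that was triggered when a class~$i$ arrival preempted a class~$j>k$ job, possibly preceded by a failed resume, giving cases (b) and (d);
\item an overhead chain of a class~$j>k$ job triggered by a class~$\ell$ arrival with $k\le\ell<j$, giving cases (c) and (e);
\item a class~$j>k$ job in original service, giving cases (f) and (g).
\end{itemize}
In cases (a)--(e) the arrival lands uniformly inside the class~$<k$ busy period generated by the relevant initiating block ($\jobPair_i$, $C_j+\jobPair_i$, $D_j^{(1)}+C_j+\jobPair_i$, and so on). By the standard length-biasing argument, the remaining time in that block is its excess $(B_{<k}(\cdot))_\e$.

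\textbf{Step 2: cases (f) and (g).} Here the class~$k$ arrival itself preempts the class~$j$ job, or it arrives during a resume. A pause $C_j$ then begins, and the class~$<k$ arrivals during that pause generate $B_{<k}(C_j)$ before any class~$k$ job can start. In case (g) the resume is only partly elapsed, so we add its residual, $(D_j^{(0)})_\e$. I would use \cref{lem:resume-tfms} to separate resumes that end up failing ($D^{(1)}$, conditioned on a class~$<j$ arrival) from the residual of a resume in progress ($D^{(0)}$).

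\textbf{Step 3: the weights.} The time fractions come from the loads:
\begin{itemize}
\item $\sigma_j$ is the fraction of time spent in class~$j$ original service;
\item chains occur at rate $\lambda_i\sigma_j$ per class of preempting arrival;
\item by \cref{remark:links}, the expected number of failed resumes per chain is $(1-\lst{D_j}(\lambda_{<j}))/\lst{D_j}(\lambda_{<j})$;
\item each triggered block has mean class~$<k$ busy period $\E{\cdot}/(1-\rho_{<k})$, by \cref{thm:class-busy-period-length};
\item idle time has fraction $1-\rho$.
\end{itemize}
The early event is not conditioned on in the raw time fractions; we pass to it by renormalizing, which I expect to produce the factor $1/(1-\rho_{\le k})$. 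The precise normalization should then be checked by verifying that the weights sum to one.

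\textbf{Main obstacle.} I expect the delicate part to be the accounting of time fractions restricted to early states, i.e.\ verifying that the weights are exactly as stated. One has to track which class~$<k$ busy periods sit inside class~$\ge k$ overhead versus inside class~$k$ superjobs, and then justify the normalizing constants $(1-\rho_{<k})/(1-\rho_{\le k})$ and $1/(1-\rho_{\le k})$. My first attempt would be a renewal-reward argument over class~$k$ relevant subtrees: compute the expected early time spent in each state per cycle, divide by the expected early time per cycle, and confirm that the total matches.
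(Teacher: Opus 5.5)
\textbf{A genuine gap, shared with the paper's proof.} Your outline is the paper's proof. You enumerate the states an early class-$k$ arrival can find. You get each state's long-run time fraction by Little's law, using $\sigma_j$, the chain rate $\lambda_i\sigma_j$, the $(1-\lst{D_j}(\lambda_{<j}))/\lst{D_j}(\lambda_{<j})$ failed resumes per chain, and $\E{B_{<k}(\cdot)}=\E{\cdot}/(1-\rho_{<k})$. You then divide by the superjob-idle probability $1-\lambda_k\E{B_{<k}(\jobPair_k)}=(1-\rho_{\le k})/(1-\rho_{<k})$, which is exactly where your two normalizing constants come from. Two things you assert rather than derive are where the argument fails: that in (d),(e) the early arrival lands uniformly in a $B_{<k}(D_j^{(1)}+C_j+\cdots)$ block, and that the leftover resume in (g) is $(D_j^{(0)})_\e$. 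The paper's proof makes the same two moves, and the sum-to-one check you propose would expose the problem.

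Let $\tau\sim\mathrm{Exp}(\lambda_{<j})$ be the time of the first class-$<j$ arrival in a class-$j$ resume. An early class-$k$ job arriving in $[0,\tau)$ is itself that first arrival, i.e.\ case (g). Indeed, the raw (g) weight $\delta_j\,\widetilde{(D_j)_\e}(\lambda_{<j})$ is precisely the time fraction of the segments $[0,\tau\wedge D_j)$ of \emph{all} resumes, failed ones included. So the (d),(e) windows must start at $\tau$. Instead they use the whole of $D_j^{(1)}$, and treat it as a Poisson block even though it has no class-$<j$ arrivals before $\tau$. Running your check with \cref{lem:overhead-load} gives total weight
\[
1+\frac{1}{1-\rho_{\le k}}\sum_{j>k}\frac{\sigma_j\,\P{N_j\ge 2}}{\lst{D_j}(\lambda_{<j})}>1,
\]
where $N_j$ is the number of class-$<j$ arrivals during a single $D_j$. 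For example, with $n=2$, $k=1$, $D_2\equiv d$, the excess over one is $\sigma_2(\e^{\lambda_1 d}-1-\lambda_1 d)/(1-\rho_1)$.

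Likewise, in (g) only the elapsed part of the resume is conditioned on. So the leftover is $Y_j\stocheq(D_j-\tau\mid\tau<D_j)$, not $(D_j^{(0)})_\e$. For $D_j\sim\mathrm{Exp}(\mu)$, memorylessness gives $Y_j\sim\mathrm{Exp}(\mu)$, whereas $(D_j^{(0)})_\e\sim\mathrm{Exp}(\mu+\lambda_{<j})$. Moreover, class-$<k$ arrivals during $Y_j$ also precede the next class-$k$ service start.

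The approach goes through if (d),(e) use $Y_j$ in place of $D_j^{(1)}$ and (g) becomes $B_{<k}(Y_j+C_j)$, where $\lst{Y_j}(\theta)=\lambda_{<j}\bigl(\lst{D_j}(\theta)-\lst{D_j}(\lambda_{<j})\bigr)/\bigl((\lambda_{<j}-\theta)(1-\lst{D_j}(\lambda_{<j}))\bigr)$. Since $\E{D_j^{(1)}}-\E{Y_j}=\E{\tau\mid\tau<D_j}$ is exactly the double-counted time, the weights then sum to one. The lemma as literally stated, however, cannot be proved.
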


\begin{figure}
    \centering
    \includegraphics[page = 7, width=\linewidth, trim= 0 0 0 0,clip]{img/preemption-oh-pics.pdf}
    \caption{Pictured above are the various states of the system to which an early class~$2$ job (green, with dotted border) could arrive. From left to right, an early class~$2$ job could (a) arrive during the busy period started by a lower-class job that arrived to an idle system, (b) arrive during the busy period started by the pause overhead triggered by an earlier class~$\leq 2$ job that arrived during the service of a class~$> 2$ job, (c) arrive during the busy period started by the pause overhead triggered by an earlier class~$\leq 2$ job that arrived during the resume of a class~$> 2$ job, (d) arrive during the service of a class~$> 2$ job, triggering pause overhead, (e) be the first higher-priority job to arrive during the resume of a class~$> 2$ job, triggering pause overhead, or (f) arrive to an idle system.}
    \label{fig:remaining-supersetup}
\end{figure}

\begin{proof}
    See \cref{app:x-proofs} for proof.
\end{proof}

\restatably\begin{lemma}\label{lem:resume-tfms} We denote by $D_j^{(0)}$ the distribution of the length of a class~$j$ resume conditional on it having succeeded, and similarly for $D_j^{(1)}$ conditional on having failed.
The transforms of the conditional class~$j$ resume distributions are:
\[
\lst D_j^{(0)}(\theta) &= \frac{\lst{D_j}(\theta + \lambda_{< j})}{\lst{D_j}(\lambda_{< j})}
,
&
\lst D_j^{(1)}(\theta) &= \frac{\lst{D_j}(\theta) - \lst{D_j}(\theta + \lambda_{< j})}{1 - \lst{D_j}(\lambda_{< j})}
.
\]
\end{lemma}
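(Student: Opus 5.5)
The plan is to compute both transforms directly, conditioning on the outcome of a single resume. Let $D_j$ be one class~$j$ resume length. By the definition of a failed resume in \cref{alg:pprio_with_overhead}, the resume succeeds exactly when no class~$<j$ arrival occurs during it. Class~$<j$ arrivals form a Poisson process of rate $\lambda_{<j}$, independent of $D_j$ (the resume is nonpreemptible, so its length is drawn independently of arrivals). Hence, conditional on $D_j = d$, the success probability is $e^{-\lambda_{<j} d}$. Taking expectations gives
\[
\P{\text{success}} = \E{e^{-\lambda_{<j} D_j}} = \lst{D_j}(\lambda_{<j}).
\]
This agrees with \cref{remark:links}.

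For the successful case, I will compute the joint quantity $\E{e^{-\theta D_j}\mathbf{1}\{\text{success}\}}$ by conditioning on $D_j$:
\[
\E{e^{-\theta D_j}\mathbf{1}\{\text{success}\}} = \E{e^{-\theta D_j} e^{-\lambda_{<j} D_j}} = \lst{D_j}(\theta + \lambda_{<j}).
\]
Dividing by $\P{\text{success}}$ gives $\lst D_j^{(0)}(\theta)$. Equivalently, in the method-of-collective-marks language of \cref{sec:MOCM}, this is the probability that the resume is unmarked by both the rate-$\theta$ marks and the class~$<j$ arrivals.

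For the failed case, I will use $\mathbf{1}\{\text{fail}\} = 1 - \mathbf{1}\{\text{success}\}$, which gives
\[
\E{e^{-\theta D_j}\mathbf{1}\{\text{fail}\}} = \lst{D_j}(\theta) - \lst{D_j}(\theta + \lambda_{<j}).
\]
Dividing by $\P{\text{fail}} = 1 - \lst{D_j}(\lambda_{<j})$ gives $\lst D_j^{(1)}(\theta)$.

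The only point needing care is justifying the independence of the arrival process from the resume length, so that conditioning on $D_j = d$ yields a Poisson$(\lambda_{<j} d)$ count. This holds because arrivals are time-homogeneous Poisson processes, which is the locality property of \cref{sec:local-arrivals}. It also matters that the events defining $D_j^{(0)}$ and $D_j^{(1)}$ concern only this one resume, so no other part of the overhead chain enters the calculation. There is no real obstacle beyond this; both formulas are one-line conditional expectation computations.
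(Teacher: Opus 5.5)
Your proposal is correct and follows essentially the same route as the paper: both compute the unmarked-and-succeeded (respectively unmarked-and-failed) quantity, obtain the failure case by complementation, and divide by the probability of the conditioning event, with the paper phrasing this in collective-marks language. The only cosmetic difference is that the paper derives $\lst D_j^{(0)}$ from $\lst D_j^{(1)}$ via the mixture identity $\lst{D_j}(\lambda_{<j})\lst D_j^{(0)}(\theta) + (1-\lst{D_j}(\lambda_{<j}))\lst D_j^{(1)}(\theta) = \lst{D_j}(\theta)$, whereas you compute it directly.
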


\begin{proof}
These are standard Laplace transform exercises. See \cref{app:x-proofs} for proof.
\end{proof}

\subsection{Proof of Main Result}
Combining \cref{thm:supersystem_equivalence, lem:x-distribution, lem:resume-tfms} yields \cref{thm:response-time}. 
The results of \cref{lem:x-distribution,thm:busy-period-transform} are sufficient to compute $\widetilde{X_k^*}$ because:
(1)~the terms in \cref{lem:x-distribution} are independent;
(2)~the transform of a sum of independent variables equals the product of their transforms;
(3)~the transform of a mixture is the weighted sum of the component transforms; and
(4)~the transform of the excess of a variable $V$ can be expressed in terms of $\lst{V}$, as noted in \cref{def:excess}.

\section{Extensions}\label{sec:extensions}

Below, we derive the job joint transform \see\cref{def:jjt} for alternative overhead models and for other systems with service-arrival dependency. The stability results from \cref{sec:JJT-stability} apply to all the below systems with their respective job joint transforms. Below, we assume that class~$k$ jobs are preemptible by class~$<k$ jobs only.

For all of the alternative models we discuss, performing response time analysis requires (1)~computing the new job joint transforms $\jjt_k$ (\cref{sec:jjt-derivation}) for the system and (2)~determining the formula for the transforms $\lst{X^*_k}$ (\cref{sec:response-time}). Unfortunately, the job joint transforms $\jjt_k$ do not uniquely determine the formula for the transforms $\lst{X^*_k}$.
Finding a natural object from which both of these can be computed remains an open question.
\Cref{sec:extensions:oh} covers alternative overhead models and \cref{sec:extensions:other} covers other systems for which our methods hold.
All but one of the models we consider satisfy the conditions of \cref{sec:JJT-stability:assumptions}, and hence \cref{thm:stability} gives their stability condition.

\subsection{Alternative Overhead Models}\label{sec:extensions:oh}

Our model makes specific choices about overhead dynamics---stochastic pause and resume overheads, nonpreemptible resumes, class-dependent overhead distributions. Each of these choices could reasonably be made differently. Below we show that alternative choices lead to modified job joint transforms that slot into the same framework we use in this paper.

\subsubsection{Preemptible Resumes}

Our current model assumes resume overheads are nonpreemptible, but we could generalize our analysis to a system where resumes can themselves be interrupted by higher-priority arrivals instead of incurring another pause first.
\[
\jjt(\theta, \vec{z}) &= \lst{S_k}(\theta + \lambda_{\geq k}(1 - z_{\geq k}) + \lambda_{< k}(1 - z_{< k} \lst\calO_{k}(\theta, \vec{z}))) \text{, where}\\
\lst\calO_{k}(\theta, \vec{z}) &=
\frac{\lst{C_k}(\theta + \lambda(1 - z))\lst{D_k}(\theta + \lambda_{\geq k}(1 - z_{\geq k}) + \lambda_{< k})}
{1 - \frac{\lambda_{< k}z_{< k}}{\theta + \lambda_{\geq k}(1 - z_{\geq k}) + \lambda_{< k}}\lst{C_k}(\theta + \lambda(1 - z))(1 - \lst{D_k}(\theta + \lambda_{\geq k}(1 - z_{\geq k}) + \lambda_{< k}))}
\]
The transform of $X^*_k$ for this system is quite close to that for our original overhead model, except that the terms from \cref{eq:x-distribution:d, eq:x-distribution:e, eq:x-distribution:g} are all $0$. (Arriving jobs would never have to wait behind a higher-class resume, as it would be preempted.)

\subsubsection{Overhead Dependent on Preempting Job Class.}

In some systems, the overhead incurred by a preempted job may depend on the class of the preempting job as well as the class of the preempted job. Assuming otherwise identical overhead dynamics to our model, this extension would simply require modifying the job joint transform to incorporate class-dependent overhead distributions ($C_{k, i}$ and $D_{k, i}$ to denote the overhead distribution of a class~$k$ job preempted by a class~$i$ job). The transform for the length of an overhead chain now requires solving a $(k-1)$-dimensional system of equations for each class~$k$.
\[
    \jjt(\theta, \vec{z}) = \lst{S_k}(\theta + \lambda_{\geq k}(1 - z_{\geq k}) + \lambda_{< k}(1 - z_{< k}\lst\calO_{k,<k}(\theta, \vec{z})))
    ,
\]
where $\lst\calO_{k, <k}(\theta, \vec{z}) = \sum_{i = 1}^{k - 1}\frac{\lambda_i}{\lambda_{< k}}\lst\calO_{k,i}(\theta, \vec{z})$ and $\lst\calO_{k, i}(\theta, \vec{z})$ is
\[
    &
    \lst{C_{k,i}}(\theta + \lambda(1-z)) \times {}
    \\ & \,\,
        \frac{\lst{D_{k,i}}(\theta + \lambda_{\geq k}(1 - z_{\geq k}) + \lambda_{< k}) + \gp[\big]{\lst{D_{k,i}}(\theta + \lambda_{\geq k}(1 - z_{\geq k}) + \lambda_{< k}) - \lst{D_{k,i}}(\theta + \lambda(1 - z))}\sum_{j\neq i, j< k} \frac{\lambda_j}{\lambda_{< k}}\lst\calO_{k,j}(\theta, \vec{z})}
    {{1 - \frac{\lambda_i}{\lambda_{< k}}\lst{C_{k,i}}(\theta + \lambda(1-z))}\gp[\big]{\lst{D_{k,i}}(\theta + \lambda_{\geq k}(1 - z_{\geq k}) + \lambda_{< k}) - \lst{D_{k,i}}(\theta + \lambda(1 - z))}}
\]
The transform for $X_k^*$ is similar to ours, but the overheads for \cref{eq:x-distribution:b, eq:x-distribution:c, eq:x-distribution:d, eq:x-distribution:e, eq:x-distribution:f, eq:x-distribution:g} are parametrized by $j$ and the class of the preempting job (either $i, \ell,$ or~$k$).

\subsubsection{Jobs with Additional Warmup and Cooldown Phases}

In these systems, jobs incur a warmup time before beginning service and a cooldown period after a job's completion, in addition to the pause and resume overheads incurred by preemptions. We assume these warmups and cooldowns for class~$k$ jobs are nonpreemptible and distributed as $G_k$ and $F_k$ respectively (and follow the same overhead dynamics as our original model), but this can be easily modified.
\[
\jjt(\theta, \vec{z}) = \lst{S_k}(\theta + \lambda_{\geq k}(1 - z) + \lambda_{< k}(1 - z_{< k}\lst\calO_k(\theta,\vec{z}))) \cdot \lst{\calP_k}(\theta,\vec{z}),
\]
where
\[
    \lst\calO_k(\theta, \vec{z})
        &= \frac{\lst{C_k}\gp*{
            \theta + \lambda(1 - z)} \, \lst{D_k}\gp*{
            \theta + \lambda_{\geq k}(1 - z_{\geq k}) + \lambda_{< k}}}
            {1 - \lst{C_k}\gp*{
            \theta + \lambda(1 - z)}  \gp[\big]{
                \lst{D_k}\gp*{
            \theta + \lambda(1 - z)} - \lst{D_k}\gp*{
            \theta + \lambda_{\geq k}(1 - z_{\geq k}) + \lambda_{< k}}
            }}
    \text{ and}\\
    \lst\calP_k(\theta, \vec{z})
        &= \frac{\lst{G_k}\gp*{
            \theta + \lambda(1 - z)} \, \lst{F_k}\gp*{
            \theta + \lambda_{\geq k}(1 - z_{\geq k}) + \lambda_{< k}}}
            {1 - \lst{G_k}\gp*{
            \theta + \lambda(1 - z)}  \gp[\big]{
                \lst{F_k}\gp*{
            \theta + \lambda(1 - z)} - \lst{F_k}\gp*{
            \theta + \lambda_{\geq k}(1 - z_{\geq k}) + \lambda_{< k}}
            }}
    .
\]
Here, $X^*$ has all the same terms as ours, but has additional terms accounting for arriving during another warmup or cooldown.

\subsection{Other Service-Arrival-Dependent Systems}\label{sec:extensions:other}

\subsubsection{Preempt-Repeat Systems}

Preempt-repeat systems have been analyzed in the literature before \cite{asmussen_preemptive-repeat_2017, drekic_reducing_2001, avi-itzhak_preemptive_1963a,gaver_waiting_1962}, but the job joint transform provides a unifying framework for deriving stability and response time results for them.

While our model assumes preempt-resume behavior, some systems employ preempt-repeat mechanisms, where preempted jobs must restart from scratch upon resuming, without any other overhead incurred. In the case of preempt-repeat, the $X^*$ term will always be $0$, as an arriving job will never incur any sort of overhead. Two common preempt-repeat models include:
\* \emph{Preempt-Repeat-Different:} Upon preemption, a job restarts with a new size drawn i.i.d. from the job size distribution. The job joint transform for this model is implicit in the argument of \citet[Proposition~$6$]{asmussen_preemptive-repeat_2017}, and can be explicitly stated as:
\[
\jjt_k(\theta, \vec{z}) = \frac{\lst{S_k}(\theta + \lambda_{\geq k}(1 - z_{\geq k}) + \lambda_{< k})}{1 - \frac{\lambda_{< k}z_{< k}}{\theta + \lambda_{\geq k}(1 - z_{\geq k}) + \lambda_{< k}}\gp[\big]{1 - \lst{S_k}(\theta + \lambda_{\geq k}(1 - z_{\geq k}) + \lambda_{< k})}}
.
\]
\* \emph{Preempt-Repeat-Identical:} Upon preemption, a job restarts with the same size as before. By conditioning on the job size $S_k = s_k$, applying the above preempt-repeat-different result to the \emph{deterministic} size distribution that is always~$s_k$, and then integrating over~$s_k$, we obtain:
\[
\jjt_k(\theta, \vec{z}) = \E[\Bigg]{\frac{\exp(-S_k\cdot(\theta + \lambda_{\geq k}(1 - z_{\geq k}) + \lambda_{< k}))}{1 - \frac{\lambda_{< k}z_{< k}}{\theta + \lambda_{\geq k}(1 - z_{\geq k}) + \lambda_{< k}}\gp[\big]{1 - \exp(-S_k\cdot(\theta + \lambda_{\geq k}(1 - z_{\geq k}) + \lambda_{< k}))}}}
.
\]
\*/

\subsubsection{State-Dependent Arrival Rates}

Another example of a system with service-arrival dependency is the multiclass queue with state-dependent rates, as studied by \citet{ernst_stability_2018}. In this system, the arrival rate of jobs depends upon the class of the job in service. As with preempt-repeat systems, while the system has already been analyzed, our job joint transform can be used to express results for this system.

Let $\lambda_{i, j}$ denote the arrival rate of class~$j$ jobs during the service of a class~$i$ job. Deriving the job joint transform for class-dependent arrival rates is fairly straightforward using the method of collective marks \see\cref{sec:MOCM}. The rate of arrivals of class~$i$ jobs during class~$k$ service is denoted by $\lambda_{k, i}$, so the joint transform of the effective length of class~$k$ service (which does not change based on arrivals during service) and arrivals during that service (which is affected by the class of service) is simply:
\[
\jjt_k(\theta,\vec{z}) = \lst{S_k}\gp[\bigg]{\theta + \sum_{i=1}^n \lambda_{k, i}(1 - z_i)}
.
\]
Note that this system does not satisfy the conditions in \cref{sec:JJT-stability:assumptions}, so the simple stability condition from \cref{thm:stability} does not apply, but it is still the case that the largest eigenvalue of the mean offspring matrix $M = (\E{A_{i, j}})_{i, j}$ determines stability.




\section{Conclusion}\label{sec:conclusion}
This work provides a first transform analysis of response time in the M/G/1 under preemption overhead, specifically under a static preemptive priority scheduling policy. Prior work in scheduling theory has mostly ignored preemption overhead, but our results show that response time under preemption overhead can be analyzed exactly and has a real impact on both response time and stability. We have shown that the response time distribution for static preemptive priority policy under preemption overhead can be expressed via a reduction to an M/G/1/setup system whose parameters are derived from the structure of the \emph{segment tree} \see\cref{sec:tree:response-time}. Our derivation of the recursive Laplace-Stieltjes transform formula \see\cref{sec:response-time} enables exact computation of all response time moments.

We further show that systems with preemption overhead, preempt-repeat, or class-dependent arrival rates share the same underlying service-arrival dependency. All can be analyzed using the \emph{job joint transform} \see\cref{def:jjt}, a general-purpose tool for modeling the two-way dependency between service length and arrival process. Specifically, our stability results from \cref{sec:JJT-stability} apply to any model with service-arrival dependency.

Beyond these specific results, this study serves as a foundation for incorporating preemption overhead into a wider array of scheduling policies. Our formulation is deliberately modular: alternative overhead dynamics can be analyzed by deriving the job joint transform $\lst{\calJ}$ and the extra work distribution $X^*$ \see\cref{def:setup} for the system. This means our approach could apply to a wide class of overhead models, as previewed in \cref{sec:extensions}. A natural next step is to generalize the framework to dynamic priority (age-based) policies and other overhead models, with the goal of developing a comprehensive theory of scheduling under overhead.

\bibliographystyle{ACM-Reference-Format}
\bibliography{refs,more-refs}

\appendix

\section{Notation Table}
\label{app:notation_table}

\BigNotationTable

\section{The Method of Collective Marks}\label{sec:MOCM}

The \emph{method of collective marks} is a probabilistic tool that can be used as a transform analysis technique \citep{runnenburg_van_1979, runnenburg_use_1965}. We employ this method in our transform derivations. Below, we outline its key ideas and terminology.

Let $V$ be a nonnegative continuous random variable, and let $\theta \in \bbR_+$. Let "marks" occur as an independent Poisson point process with rate $\lambda$. The Laplace-Stieltjes transform (LST) of $V$, denoted $\lst{V}(\theta)$, satisfies:
\[
    \lst{V}(\theta) = \E{e^{-\theta V}} = \P{\text{no marks occur during $V$}}.
\]
Thus, $\lst{V}(\theta)$ can be interpreted as the probability that no "marked" Poisson arrivals (occurring at rate $\theta$) occur during $V$. In this context, we refer to $V$ as being "unmarked."

Similarly, for a nonnegative discrete random variable $W$, let $z \in [0, 1]$. Consider sampling $W$ objects, each independently marked with probability $1-z$. The z-transform of $W$, denoted $\widehat{W}(z)$, satisfies:
\[
    \widehat{W}(z) = \E{z^W} = \P{\text{none of the $W$ objects are marked}}.
\]
Thus, $\widehat{W}(z)$ can be interpreted as the probability that no "marks" are encountered among $W$ objects, where each object is independently marked with probability $1-z$. In our transform analysis, we shall describe this as the probability that $W$ is "unmarked."

The ``mark'' intuition extends to joint transforms. For instance, the job joint transform (\cref{def:jjt}):
\[
    \jjt(\theta, z) = \E{e^{-\theta R}z^A},
\]
represents the probability that no Poisson marks, which appear at rate $\theta$, occur during a job's effective service time $R$ and none of the $A$ job arrivals during this effective service time are marked, where each arrival is independently unmarked with probability $z$.

\subsection{Deferred Stability and Busy Period Proofs}\label{sec:JJT-stability:proofs}

\begin{lemma}\label{lem:poisson-standard-joint-tfm}
    The standard Poisson joint transform of a duration distribution~$V$ is
    \[
    \lst{\calP(V)}(\theta, \vec{z}) = \lst{V}(\theta + \lambda_{\leq n}(1 - z_{\leq n})).
    \]
\end{lemma}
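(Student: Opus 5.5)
The plan is to condition on the duration $V$ and then use conditional independence of the Poisson counts. Fix $\theta \ge 0$ and $\vec z \in [0,1]^n$. By definition,
\[
\lst{\calP(V)}(\theta,\vec z) = \E*{e^{-\theta V}\prod_{i=1}^n z_i^{A_{\calP(V),i}}}.
\]
Apply the tower property, conditioning on $V$. Given $V = v$, the counts $A_{\calP(V),i}$ are independent $\mathrm{Poisson}(\lambda_i v)$ random variables. The conditional expectation therefore factors as
\[
e^{-\theta v}\prod_{i=1}^n \E{z_i^{A_{\calP(V),i}} \given V = v}.
\]

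Next, use the Poisson generating function $\E{z^{N}} = e^{-\mu(1-z)}$ for $N \sim \mathrm{Poisson}(\mu)$. Each factor becomes $e^{-\lambda_i v(1-z_i)}$, so the conditional expectation equals
\[
\exp\Bigl(-v\bigl(\theta + \textstyle\sum_{i}\lambda_i(1-z_i)\bigr)\Bigr).
\]
By the notation conventions, $\sum_{i=1}^n \lambda_i(1-z_i) = \lambda_{\le n}\bigl(1 - \sum_i \tfrac{\lambda_i}{\lambda_{\le n}} z_i\bigr) = \lambda_{\le n}(1 - z_{\le n})$. Taking the outer expectation over $V$ gives $\lst V(\theta + \lambda_{\le n}(1-z_{\le n}))$, which is the claim.

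An alternative route is the method of collective marks. Say $\calP(V)$ is unmarked if no rate-$\theta$ marks occur during $V$ and none of the arrivals is marked, where each class-$i$ arrival is marked with probability $1-z_i$. By Poisson thinning, the marked class-$i$ arrivals form a Poisson process of rate $\lambda_i(1-z_i)$. These processes are independent of each other and of the $\theta$-marks, so the total mark rate is $\theta + \lambda_{\le n}(1-z_{\le n})$.

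No step is a real obstacle. The only points needing care are the following.
\begin{itemize}
\item Exchanging the product and the conditional expectation requires the conditional independence built into \cref{def:poisson-sajd}.
\item Justifying the tower property for $z_i \in [0,1]$ is routine, since all integrands are bounded by $1$.
\item The mixture-subscript identity for $z_{\le n}$ must be matched to the paper's convention.
\end{itemize}
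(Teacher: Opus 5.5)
Your proof is correct and follows essentially the same route as the paper: condition on $V$, factor the conditional expectation using the conditional independence of the Poisson counts, apply the Poisson generating function, and then rewrite $\sum_{i=1}^n \lambda_i(1-z_i)$ as $\lambda_{\leq n}(1-z_{\leq n})$. Your collective-marks variant is a valid reformulation of the same computation and is not needed for the proof.
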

\begin{proof}
    Using standard properties of Poisson arrivals \cite[Chapter~27]{harchol-balter_performance_2013}, we compute
    \[
    \lst{\calP(V)}(\theta, \vec{z})
    = \E*{\E*{e^{-\theta V}\prod_{i=1}^n z_i^{A_{V,i}} \given V}}
    =\E*{e^{-\theta V}e^{-V\sum_{i=1}^n\lambda_i(1-z_i)}}  
    = \lst{V}\gp*{\theta + \sum_{i=1}^n \lambda_i(1-z_i)}
    ,
    \]
    and $\sum_{i=1}^n \lambda_i(1 - z_i) = \lambda_{\leq n}(1 - z_{\leq n})$ (see \cref{model:notation}).
\end{proof}

\restate*{thm:busy-period-transform}

\begin{proof}
We use the recursive characterization of busy periods from \cref{def:class-busy-period}. We shall use this and the method of collective marks \see\cref{sec:MOCM} to first prove~(a).

\paragraph{Proof of~(a).}
The transform $\lst{B(\calV)}(\theta)$ gives the probability that no Poisson marks (rate $\theta$) occur during the busy period initiated by a service-arrival joint distribution $\calV$. This busy period consists of the duration component of $\calV$ and the busy periods recursively generated by each class~$i$ job arriving during that time.

The service-arrival joint transform $\lst{\calV}(\theta, \vec{z})$ gives the probability that no Poisson marks occur during the duration component $R_\calV$ (rate $\theta$), and that each class~$i$ arrival is also unmarked with probability $z_i$. We set $z_i = \lst{B(\jobPair_i)}(\theta)$, which is the probability that the busy period initiated by a class~$i$ job is unmarked. Doing so, we obtain the probability that no marks occur during a the root's duration component and no marks occur in the busy period started by each arriving job during this time. We thus achieve the desired result.

\paragraph{Proof of~(b).}
By plugging in $\jobPair_k$ for $\calV$ in (a), it follows that $\vec{b}$ is a nonnegative real solution to the equation implied by (a). It remains to show that it is the least nonnegative solution. This proof closely mirrors that of \citet{ernst_stability_2018} for a more general class of systems with service-arrival dependency.

We begin by defining a "layer-truncated" busy period.

\begin{definition}\label{def:class-busy-period-truncated}
The \emph{$m$-truncated busy period} started by service-arrival joint distribution $\calV$, denoted~$B^{(m)}(\calV)$, is the random variable whose distribution is recursively defined by
\[
B^{(0)}(\calV) \stocheq R_\calV \, 
 \text{ and }\,
B^{(m + 1)} (\calV) \stocheq R_\calV + \sum_{i=1}^{n} \sum_{j=1}^{A_{\calV, i}} \gp[\big]{B^{(m)}(\jobPair_i)}_j\, ,
\]
where $\gp[\big]{B^{(m-1)}(\jobPair_i)}_j$ are i.i.d. copies of $B^{(m)}(\jobPair_i)$.
\end{definition}

Let $b_i^{(m)} = \E*{e^{-\theta B(\jobPair_i)} \, \1(B(\calV) = B^{(m)}(\calV))}$ denote the "$m$-truncated transform" of the busy period, where we "zero out" the contribution from any busy period that exceeds depth~$m$. We write $\vec{b}^{(m)}$ to mean $\sqgp[\big]{b_1^{(m)}, \ldots, b_n^{(m)}}$. By conditioning on the depth of the job tree and using the recursive structure of the busy period, this sequence $\vec{b}^{(m)}$ satisfies:
\[
b_i^{(0)} &= \jobPair_i\gp[\big]{\theta, \vec{0}} \quad \text{and}\\
b_i^{(m+1)} &= \jobPair_i\gp[\big]{\theta, \vec{b}^{(m)}}
\]
for each $m \geq 0$. 

The sequence $\vec{b}^{(m)}$ is well-defined and non-decreasing, since each $\jobPair_i(\theta, \cdot)$ is non-decreasing in its second argument. Moreover, it follows by induction that $\vec{b}^{(m)} \leq \vec{b}$ for all $m$. Since the sequence $\vec{b}^{(m)}$ is monotone increasing and bounded above, the limit $\lim_{m \to \infty} \vec{b}^{(m)}$
exists by the monotone convergence theorem. Taking the limit and using Lebesgue's dominated convergence theorem, we obtain
\[
\lim_{m \to \infty} \vec{b}^{(m)} = \lim_{m \to \infty} \sqgp*{\E*{e^{-\theta B(\jobPair_i)}\, \1(B(\calV) = B^{(m)}(\calV))}}_{i=1}^n = \vec{b}.
\]

Now let $\vec{b}'$ be another arbitrary nonnegative solution to the system, meaning $b_i' = \jobPair_i\gp[\big]{ \theta, \vec{b}' }$ for all~$i$. A similar inductive argument to earlier shows that $\vec{b}^{(m)} \leq \vec{b}'$ for all $m$ as well. Thus we have $\vec{b} \leq \vec{b}'$ pointwise. This implies that $\vec{b}$ is less than or equal to any other nonnegative solution. Therefore, $\vec{b}$ is the least nonnegative solution to the system.
\end{proof}


\restate*{lem:num-arrivals}
\begin{proof}
    Consider the renewal process where an infinite sequence of class~$i$ jobs are served back-to-back while the (time-homogeneous) Poisson arrivals occur.
    Renewal reward theorem implies $\lambda_j = \E{A_{i, j}} / \E{R_i}$.
\end{proof}

\restate*{corr:busy-period-length}

\restate*{thm:class-busy-period-length}

\begin{proof}
Recall from \cref{model:notation} that one can plug in $\ell = n+1$ to obtain the mean of a busy period that admits all classes.
We first show~(a). The expected length of a class~$<\ell$ busy period started by service-arrival joint distribution $\calV$ is simply the sum of the expected effective service time of $\calV$ plus the sum of the \emph{independent} busy periods started by each job arrival of each class $1$ to $\ell -1$ during that duration. By the linearity of expectation and Wald's equation, this simplifies to~(a):
\[
\E*{B_{<\ell}(\calV)} = \E{R_\calV} + \sum_{j=1}^{\ell - 1}\E*{A_{\calV,j}} \E*{B_{<\ell}(\jobPair_j)}.
\]

It remains only to show~(b).
We want to compute the expected length of the class $<\ell$ busy period initiated by a class~$k$ job. Plugging in $\jobPair_k$ for $\calV$ in part (a), we have
\[
\E*{B_{<\ell}(\jobPair_k)} = \E*{R_k} + \sum_{j=1}^{\ell -1} \E*{A_{k,j}} \E*{B_{<\ell}(\jobPair_j)}.
\]
Let $\E{\vec{B}_{< \ell}} = (\E*{B_{<\ell}(\jobPair_1)}, \ldots, \E*{B_{<\ell}(\jobPair_{\ell - 1})})^\top$ denote the vector of expected class $<\ell$ busy period lengths, and let $\E{\vec{R}_{< \ell}} = (\E*{R_1}, \ldots, \E*{R_{\ell - 1}})^\top$ denote the vector of expected effective service times for all classes~$<\ell$. Let $M_{< \ell}$ be the $(\ell-1) \times (\ell-1)$ mean matrix with entries $M_{i,j} = \E*{A_{i,j}}$ (as in \cref{thm:stability}).

From \cref{lem:num-arrivals}, we have $M_{i,j} = \lambda_j \E*{R_i}$, so $M_{< \ell}$ is a rank-one matrix of the form $\E{\vec{R}_{< \ell}} \vec{\lambda}_{< \ell}^\top$, where $\vec{\lambda}_{< \ell} = (\lambda_1, \ldots, \lambda_{\ell - 1})^\top$ is the vector of class~$< \ell$-specific arrival rates. Then it follows that
\[
\E{\vec{B}_{< \ell}} = \E{\vec{R}_{< \ell}} + M_{< \ell} \E{\vec{B}_{< \ell}} = \E{\vec{R}_{< \ell}} + (\E{\vec{R}_{< \ell}} \vec{\lambda}_{< \ell}^\top) \E{\vec{B}_{< \ell}}.
\]
This implies
$(I - M_{< \ell})\E{\vec{B}_{< \ell}} = \E{\vec{R}_{< \ell}}$.
Since $M_{< \ell}$ is a rank-$1$ matrix, we can compute the following inverse using the Sherman-Morrison formula, obtaining
\[
(I - \E{\vec{R}_{< \ell}} \vec{\lambda}_{< \ell}^\top)^{-1} = I + \frac{\E{\vec{R}_{< \ell}} \vec{\lambda}_{< \ell}^\top}{1 - \vec{\lambda}_{< \ell}^\top \E{\vec{R}_{< \ell}}},
\]
provided $1 - \vec{\lambda}^\top \E{\vec{B}_{< \ell}} \neq 0$. In our case, $\vec{\lambda}_{< \ell}^\top \E{\vec{B}_{< \ell}} = \sum_{j<\ell} \lambda_j \E*{R_j} = \rho_{<\ell} \leq \rho < 1$ by the stability condition.
Therefore, for any class $k < \ell$, we find that
\[
\E*{B(\jobPair_k)} = e_k^\top \E{\vec{B}_{< \ell}} = e_k^\top (I - M_{< \ell})^{-1} \E{\vec{R}_{< \ell}} = \frac{\E*{R_k}}{1 - \rho_{<\ell}}.
\]
\end{proof}

\section{Proofs for Response Time Transform}\label{app:x-proofs}
\restate*{lem:x-distribution}
\begin{proof}
    We wish to find the distribution of $X^*$, the length of time between an early class~$k$ job's arrival and when any class~$k$ job can next begin service. To do this, we must consider the work a early class~$k$ job encounters upon arriving to the preemptive priority system with overhead. Recall that a class~$k$ job being an early arrival means it arrives when no class~$k$ in the system have received any service thus far \see\cref{def:early}.

    Thus the system can be in only one of the following states when an early class~$k$ job arrives to the system:
    \*[topsep=\medskipamount, itemsep=\medskipamount][\itshape(a--c)] The job arrives while the system is completing strictly higher priority work; specifically, is in the middle of a class~$<k$ busy period started by:
    \**[\bfseries(a)] A class~$< k$ job that arrived to an idle system.
    \** The pause overhead triggered by an earlier job that arrived during service of a class~$> k$ job (plus the earlier job, if it was of a class less than~$k$).
    \** The pause overhead triggered by an earlier job arriving during a class~$> k$ resume  (plus the earlier job, if it was of a class less than~$k$).
    \*[\itshape(d--e)] The job arrives while the system is completing strictly lower-priority work, and this job's arrival triggers overhead, specifically:
    \**[\bfseries(a), resume] The job arrives during service of a class~$> k$ job and triggers pause overhead.
    \** The job is the first to arrive during the resume of a class~$> k$ job and causes that resume to fail, triggering pause overhead
    \*[\bfseries(f)] The job arrives while the system is completely idle.
    \*/
    This observation directly gives us the distribution of each state, except those of (c) and (e).
    We begin by computing the above distribution of the states for (c) and (e).

    For state (c), the work an early class~$k$ job sees is the remainder of the busy period started by the resume, the resulting pause, and the job that caused the resume to fail (if it was of class~$< k$). Observe that we are conditioning on the fact that the resume failed, meaning its length is biased. We will call the length of a resume conditioned on it failing $D^{(1)}_j$. We compute the transform for this amount in \cref{lem:resume-tfms}.

    For state (e), we note that if the tagged job itself caused the resume to fail, the job will experience waiting time equal to the remainder of that resume, plus the pause it triggers, plus the class~$<k$ busy period of that pause.  Because we are conditioning on the tagged job being the first to arrive to this resume, the length of the remainder, $(D^{(0)}_j)_\e$, is biased. We compute the transform for this amount in \cref{lem:resume-tfms}.
    
    It remains to find the probability of encountering each state. Because we are looking at the distribution of work seen by an early class~$k$ job, this means that the server has not yet begun working on any class~$k$ jobs in the system. Recall from \cref{sec:tree:response-time} that this means no class~$k$ superjobs are in-progress when the an early arrives. That is, the server is class~$k$ superjob-idle. Recall that class~$k$ superjobs have length distributed as $B_{<k}(\jobPair_k)$. Because (a)-(f) cannot occur if the system is currently working on a class~$k$ superjob, the probability of an early class~$k$ job encountering each state (a)-(f) is equal to the probability of that state occurring divided by the probability that the server is class~$k$ superjob-idle:
    \[
    \P{\text{server is class~$k$ superjob-idle}}&= 1 - \lambda_k \E{B_{<k}(\jobPair_k)} \\ &=
    1 - \frac{\lambda_k\E{R_k}}{1-\rho_{<k}} \text{\, by \cref{corr:busy-period-length}}\\ &= \frac{1-\rho_{<k}}{1-\rho_{<k}} - \frac{\rho_k}{1-\rho_{<k}} \\
    &= \frac{1-\rho_{\leq k}}{1-\rho_{<k}}.
    \]
    
    \paragraph{State (a): The class~$k$ job arrives while the system is completing a busy period started by a class~$< k$ job that arrived to an idle system}
    By Little's Law \cite{little_littles_2011}, the probability of encountering the system in such a state is the rate at which such busy periods occur times the expected length of these busy periods. The rate of these busy periods is $\lambda_i(1-\rho)$, and the expected length of such a busy period is $\frac{\E{R_i}}{1-\rho_{<k}}$ by \cref{corr:busy-period-length}.
    
    \paragraph{State (b): The class~$k$ job arrives while the system is completing a busy period started by the overhead triggered by an earlier job that arrived during service of a class~$> k$ job}
    The probability of encountering the system in such a state can be split into two parts:
    
    In the case that a class $i < k$ job caused the pause, the probability of encountering a busy period started by the resulting overhead can be computed using Little's Law. The rate of pause overheads caused by a class~$i$ job arriving during class~$j$ service is $\lambda_i\sigma_j$ by Poisson splitting. The expected length of the resulting supersetup is the length of a class~$<k$ busy period started by a class~$j$ pause and the class~$i$ job that caused the pause, which is just $\frac{\E{C_j + R_i}}{1 - \rho_{<k}}$ by \cref{corr:busy-period-length}.

    In the case that a class~$\ell \in [k,j)$ job caused the pause, the probability of encountering this state is similar to the above, except that $\lambda_i$ becomes $\lambda_{\ell}$ and the expected length of the resulting supersetup is just $\frac{\E{C_j}}{1 - \rho_{<k}}$ because the job that caused the pause is either not higher-priority than a class~$k$ job, or is a class~$k$ job and therefore has been accounted for elsewhere.
    
    \paragraph{State (c): The class~$k$ job arrives while the system is completing a busy period started by the overhead triggered by an earlier job arriving during a class~$> k$ resume}
    The probability of encountering the system in such a state can be split into two parts:
    
    In the case that a class $i < k$ job caused a failed resume, the probability of encountering a busy period started by this can be computed using Little's Law.\footnote{To understand why we can break down this probability using Little's Law in the following way in more detail, see \cref{app:conditional-prob}.} The rate of failed resumes caused by a class~$i$ job, by the independence of resume durations and the Palm Inversion Formula, is equal to the probability that the first class~$<j$ arrival in a resume was class~$i$, times the rate of overhead chains times the expected number of resumes that fail per overhead chain. The probability that the first class~$<j$ arrival in a resume was class~$i$ is $\frac{\lambda_i}{\lambda_{<j}}$. The rate of overhead chains for a class~$j$ job is $\lambda_{<j}\sigma_j$. The expected number of resumes that fail per overhead chain is equal to the number of resumes per overhead chain minus one, which is just $\frac{1 - \lst{D_j}(\lambda_{<j})}{\lst{D_j}(\lambda_{<j})}$. The expected length of the resulting supersetup is the length of a class~$<k$ busy period started by a failed resume, the class~$i$ job that caused the resume, and the class~$j$ pause resulting from the failed resume, which is just $\frac{\E{D^{(1)}_j + C_j + R_i}}{1 - \rho_{<k}}$ by \cref{corr:busy-period-length}.

    In the case that a class~$\ell \in [k,j)$ job caused the resume to fail, the probability of encountering this state is similar to the above, except that $\lambda_i$ becomes $\lambda_{\ell}$ and the expected length of the resulting supersetup is just $\frac{\E{D^{(1)}_j + C_j}}{1 - \rho_{<k}}$ because the job that caused the resume to fail is either not higher-priority than a class~$k$ job, or is a class~$k$ job and therefore has been accounted for elsewhere.
    
    \paragraph{State (d): The class~$k$ job arrives during service of a class~$> k$ job and triggers pause overhead}
    In the instant before this job arrives, there is no work relevant to class~$k$ jobs in the system. However, once this job arrives, it triggers a class~$> k$ pause overhead and will have to wait for the class~$< k$ busy period started by this pause to conclude before any class~$k$ job can begin service.
    The probability of an early class~$k$ job encountering the system in this state is the load of class~$> k$ service.
    
    \paragraph{State (e): The class~$k$ job is the first to arrive during the resume of a class~$> k$ job and causes that resume to fail, triggering pause overhead}
    In the instant before this job arrives, the only work relevant to class~$k$ jobs in the system is the \emph{remainder} of the thus-far-uninterrupted class~$> k$ resume in progress. However, once this job arrives, it triggers a class~$> k$ pause overhead and will have to wait for the class~$< k$ busy period started by the remainder of this resume and the resulting pause to conclude before any class~$k$ job can begin service. This length of work is distributed as the excess of the resume conditioned on it being uninterrupted before our arrival, plus the class~$< k$ busy period started by the resulting pause.
    The probability of an early class~$k$ job encountering the system in this state is the long-run probability of arriving during a resume and that there have been no arrivals before the time at which you arrive. The long-run average probability of arriving during a resume is $\delta_j$. If we consider a renewal process of resumes, the probability that our job is the first arrival in a resume is the probability that the age of the resume during which we arrive has no class~$<j$ arrivals. This is equal to $(\lst{D_j})_\e(\lambda_{< j})$ by the method of collective marks \see\cref{sec:MOCM}. We can evaluate this as:
    \[
    \delta_j\cdot (\lst{D_j})_\e(\lambda_{< j}) &= \delta_j\cdot \frac{1 - \lst{D_j}(\lambda_{<j})}{\lambda_{<j}\E{D_j}} \\
    &= \delta_j\cdot \frac{1 - \lst{D_j}(\lambda_{<j})}{\lambda_{<j}\E{D_j}} \cdot \frac{\sigma_j}{\sigma_j} \cdot \frac{\lst{D_j}(\lambda_{<j})}{\lst{D_j}(\lambda_{<j})} \\
    &= \sigma_j \gp*{\frac{1 - \lst{D_j}(\lambda_{<j})}{\lst{D_j}(\lambda_{<j})}}.
    \]
    
    \paragraph{State (f): The class~$k$ job arrives to an idle system}
    In this case, there is $0$ supersetup work a class~$k$ job must wait behind before beginning service. The probability of an early class~$k$ job encountering the system in such a state is the probability of an arrival encountering an idle system.
    
    From this accounting, the desired formulas follow.
\end{proof}

\restate*{lem:resume-tfms}
\begin{proof}
    Both follow from a straightforward application of the method of collective marks (\cref{sec:MOCM}).
    The transform of a class~$j$ resume conditioned on it having failed is equal to the probability that a class~$j$ resume failed and had no marks divided by the probability that it failed. The probability that a class~$j$ resume failed and had no marks is equal to the probability that it had no marks minus the probability that it succeeded and had no marks. Hence the desired formula is:
    \[
    \lst D_j^{(1)}(\theta) &= \frac{\lst{D_j}(\theta) - \lst{D_j}(\theta + \lambda_{< j})}{1 - \lst{D_j}(\lambda_{< j})}.
    \]
    The transform of $D_j^{(0)}$ follows from the transform of $D_j^{(1)}$ and the fact that $\lst{D_j}(\lambda_{<j})$ is the probability of a resume succeeding. Specifically,
    \[
    \lst{D_j}(\lambda_{<j})\cdot \lst{D_j^{(0)}}(\theta) + (1 - \lst{D_j}(\lambda_{<j}))\cdot \lst{D_j^{(1)}}(\theta) = \lst{D_j}(\theta)
    \]
\end{proof}

\section{Distribution of Resumes}\label{app:conditional-prob}
When we are computing the long-run probability of an early class~$k$ job encountering a class~$<k$ busy period tree started by a class~$< j$ job arriving during a class~$j > k$ resume and causing a pause, we decompose it using Little's Law, and condition the length of this resume on it having failed. We can do this because we are effectively first sampling whether or not a resume has failed, then computing the length of the resume conditioned on it having failed. The length of the busy period started by this resume is just the length of a busy period started by a resume that has failed, because the time of the first arrival, conditioned on the resume having failed, is uniformly distributed. In the case where the job causing the resume to fail was of class~$i < k$, the reason we can separately count this class~$i$ job that caused the pause without modifying the busy period distribution is because the arrival process is Poisson.

\end{document}